\newcommand{\rephrase}[3]{\noindent\textbf{#1 #2}.~\emph{#3}}
\title{Strengthening Hardness Results to\\ 3-Connected Planar Graphs
}
\author{Giordano Da Lozzo\inst{1} \and Ignaz Rutter\inst{2}}
\institute{%
Department of Engineering, Roma Tre University, Italy
\and
Karlsruhe Institute of Technology, Germany}
\let\doendproof\endproof
\renewcommand{\endproof}{~\hfill$\qed$\doendproof}
\newcommand{\remove}[1]{\xspace}
 \newcommand{\eps}{\ensuremath{\varepsilon}}  
\newcommand{\mis}{{\sc MIS}\xspace}
  \DeclareMathOperator{\skel}{skel} 
  \DeclareMathOperator{\expd}{exp} 
  \DeclareMathOperator{\pert}{pert}
\renewcommand{\todo}[2][]{\@bsphack\@todo[#1]{\textcolor{black}{#2}}\@esphack\ignorespaces}
\begin{document}

\maketitle

\begin{abstract}
In this paper we extend some classical NP-hardness results from the class of $2$-connected planar graphs to subclasses of $3$-connected planar graphs. The reduction are partly based on a new graph augmentation, which may be of independent interest.
\end{abstract}

\section{Introduction}
\label{sec:introduction}

When proving NP-hardness results for graph drawing problems on planar graphs with variable embedding it is often necessary to restrict the embedding choices of a construction.
In this case it is convenient to reduce from problems that are NP-complete for subclasses of $3$-connected planar graphs, which have an almost unique combinatorial embedding.
While there exist many hardness results for $2$-connected planar graphs, only few results are known for the $3$-connected case.

In this paper we strengthen some classical NP-hardness results to this setting. We show NP-hardness for {\sc Maximum Independent Set} ({\sc MIS}) for $3$-connected cubic planar graphs and planar triangulations, {$3$-Coloring} for $3$-connected bounded-degree planar graphs, and {\sc Steiner Tree} for $3$-connected cubic planar graphs.

The reductions for {\sc MIS} and {\sc 3-Coloring} are based on a new
graph augmentation technique (Lemma~\ref{lem:augmentation}) to
transform \mbox{$2$-connected} planar graphs with bounded degree into
$3$-connected planar graphs with bounded degree by ``subdividing''
each edge at most once, which may be of independent interest.  

We have recently used the hardness of \mis for 3-connected planar
graphs for showing hardness of an embedding problem that asks to
maximize the number of facial cycles that are contained in a given set
$\mathcal C$~\cite{dr-crfc-16}.  Similarly, our result on Steiner
trees can be used to simplify and extend
hardness results for embedding problems (e.g.,\cite[Theorem 8]{adfjk-tppeg-14},~\cite[Theorem 7]{adn-aspbe-15}, and~\mbox{\cite[Theorem 17]{addfpr-rccp-15})}.

\section{Preliminaries}
\label{sec:preliminaries}

We assume familiarity with basic concepts of graph drawing and planarity (see, e.g.,~\cite{dett-gd-99}).
%
For the definition of the SPQR-tree of a biconnected graph and the concepts of {\em skeleton} $\skel(\mu)$ and {\em pertinent graph} $\pert(\mu)$ of a node $\mu$ of an SPQR-tree, and that of {\em virtual edge} of a skeleton, and {\em expansion graph} of a virtual edge we refer the reader to~\cite{djkr-pesuf-14}.
For the definition of {\em canonical ordering} we refer the reader to~\cite{fpp-hdpgg-90}.
For convenience we also provide definitions in Appendix~\ref{apx:pre}.

Let $G=(V,E)$ be a plane graph with two designated edges $e',e'' \in
E$ incident to the same face.  A {\em graph augmentation} on the pair
$\langle e'=(u,v), e''=(w,z)\rangle$ turns $G$ into a new planar graph
$G'$ by replacing edges $e'$ and $e''$ with a connected planar graph
$G_A$ containing four vertices each of which is identified with one of
$\{u,v,w,z\}$ in such a way that $G_A$ is planar.  We say that a graph
augmentation is {\em $k$-good} if it does not increase the number of
$k$-cuts in the graph.  An {\em $H$-split} on the pair $\langle e',
e''\rangle$ is a graph augmentation that turns $G$ into a new plane
graph $G'$ by subdividing edges $e'$ and $e''$ with a dummy vertex
$v'$ and $v''$, respectively, and by adding edge $(v',v'')$.  Clearly,
an $H$-split is $2$-good.

\subsubsection*{NP-hard Problems.}
An independent set in a graph $G=(V,E)$ is a subset $V' \subseteq V$
of pairwise non-adjacent vertices.  The problem \mis asks for a
maximum size independent set.  A 3-coloring of a graph $G=(V,E)$ is an
assignment $c \colon V \to \{1,2,3\}$ such that for every edge $(u,v)
\in E$ it is $c(u) \ne c(v)$.  The problem {\sc 3-Coloring} asks
whether a given graph admits a 3-coloring.  let $(G,T)$ be a pair
where $G = (V,E)$ is a graph and $T \subseteq V$ is a set of
\emph{terminals}.  A Steiner tree is a subtree of $G$ that contains
all vertices in $T$.  The problem {\sc Steiner tree} asks for an
instance $(G,T)$ for a smallest Steiner tree of $G$, where the size is
measured in terms of the number of edges.

\section{Bounded-degree Augmentation}
\label{sec:augmentation}

In this section we give an algorithm (Lemma~\ref{lem:augmentation}) to augment a $2$-connected planar graph $G$ with minimum degree $\delta(G) \geq 3$ and maximum degree $\Delta(G)$ to a $3$-connected planar graph $G'$ with $\delta(G')=\delta(G)$ and $\Delta(G')=\Delta(G)$ by applying $H$-splits to disjoint pairs of edges of $G$.

A planar embedding of $\pert(\mu)$ is {\em regular} if the parent edge is incident to the outer face.
Let $\mathcal E_\mu$ be a planar embedding of $\pert(\mu)$ and let  $e$ be an edge of $\pert(\mu)$ that is incident to the outer face after removing the parent edge.
Embedding $\mathcal E_{\mu}$ is {\em $e$-externally $3$-connectible} if either
\begin{inparaenum}
\item $\mu$ is a Q-node or
\item the graph obtained from $\pert(\mu)$ by performing an $H$-split on $e$ and the parent edge is a subdivision of a $3$-connected planar graph whose only degree-$2$ vertices, if any, are the poles of $\mu$.
\end{inparaenum}
Also, we say that $\mu$ is {\em $e$-externally $3$-connectible} (or, simply, {\em externally $3$-connectible}) if $\pert(\mu)$ admits an $e$-externally $3$-connectible embedding, for some edge $e$ of $\pert(\mu)$.

Let $\mu$ be a node of the SPQR-tree $\mathcal T$ of $G$, let $e$ be an edge of $\pert(\mu)$, and let $\pert^*(\mu)$ be a graph obtained by applying $H$-splits on distinct pairs of edges in $\pert(\mu)$. 
We say that $e$ is a {\em free edge} if $e \in E(\pert(\mu)) \setminus E(\pert^*(\mu))$, that is, edge $e$ has not been used in any $H$-split.

Consider a non-Q-node $\mu$. Let $\pert^*(\mu)$ be a graph obtained from $\pert(\mu)$ via a set of edge-disjoint $H$-splits, let $L_\mu=[e_1,e_2]$ and $R_\mu=[e_3]$ be two lists of free edges in $E(\pert^*(\mu))$, and let $\mathcal E^*_\mu$ be a regular embedding of $\pert^*(\mu)$. 
We say that $\mathcal E^*_\mu$ is {\em extendible} if $L_\mu$ and $R_\mu$ are incident to different faces of $\mathcal E^*_\mu$ incident to the parent edge and $\mathcal E^*_\mu$ is $e$-externally $3$-connectible with $e \in L_\mu$.  

The proof is based on inductively constructing an extendible embedding of each node $\mu$ of the SPQR-tree assuming that extendible embeddings exist for the children of $\mu$. 
For S-nodes and P-nodes the construction is straightforward; see Fig.~\ref{fig:ext-SP-node}. Note that after the augmentation there are two free edges on one side of the embedding and one on the other side, which satisfies our invariant.

\begin{figure}[tb!]
\centering
  \subfloat{\includegraphics[page=1]{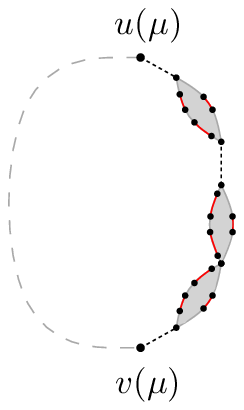}\label{fig:ext-S-node-a}}
  \subfloat{\includegraphics[page=2]{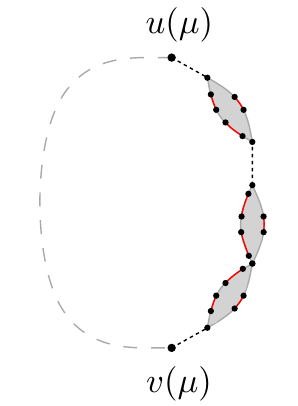}\label{fig:ext-S-node-b}}
  \subfloat{\includegraphics[page=1]{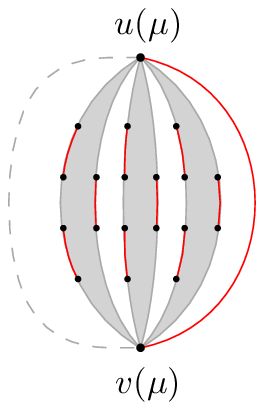}\label{fig:ext-P-node-a}}
  \subfloat{\includegraphics[page=2]{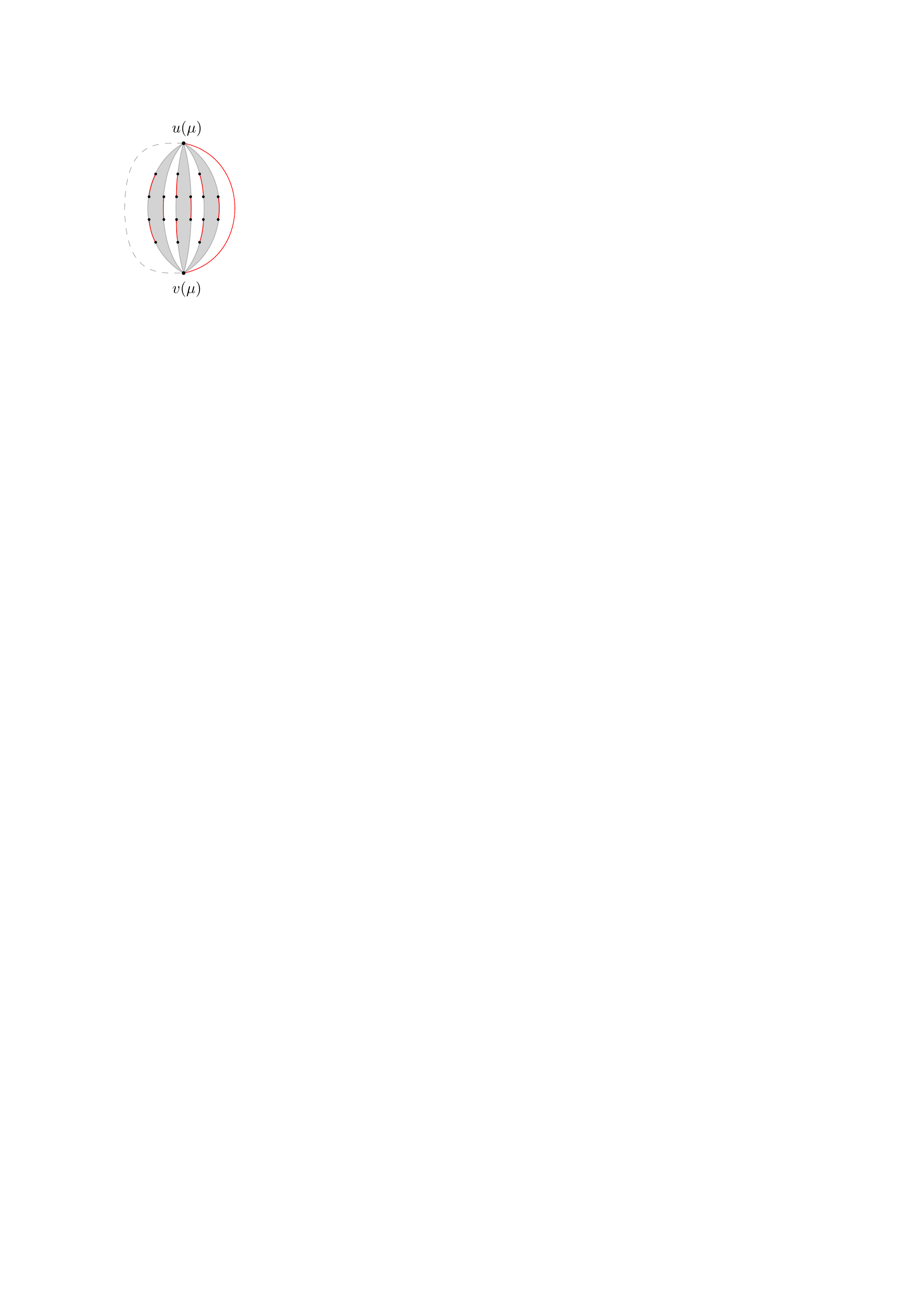}\label{fig:ext-P-node-b}}
\caption{
Augmentation of an $S$-node (a--b) and of a P-node (c--d) via $H$-splits.
Free edges are red, while the vertices and edges created by $H$-splits are green.
}
\label{fig:ext-SP-node} 
\end{figure}

For an R-node $\mu$ we construct an extendible embedding by processing the vertices of $\skel(\mu)$ according to a canonical ordering as illustrated in Fig.~\ref{fig:ext-R-node}. The construction allows us to satisfy our invariant by embedding the augmentations of the children of $\mu$ such that they contribute free edges to the outer face in each step of the canonical ordering. Special care has to be taken if the edge from which the canonical ordering starts does not correspond to a Q-node.

\begin{figure}[tb!]
  \centering
  \subfloat{ \includegraphics[height=0.16\textwidth,page=1]{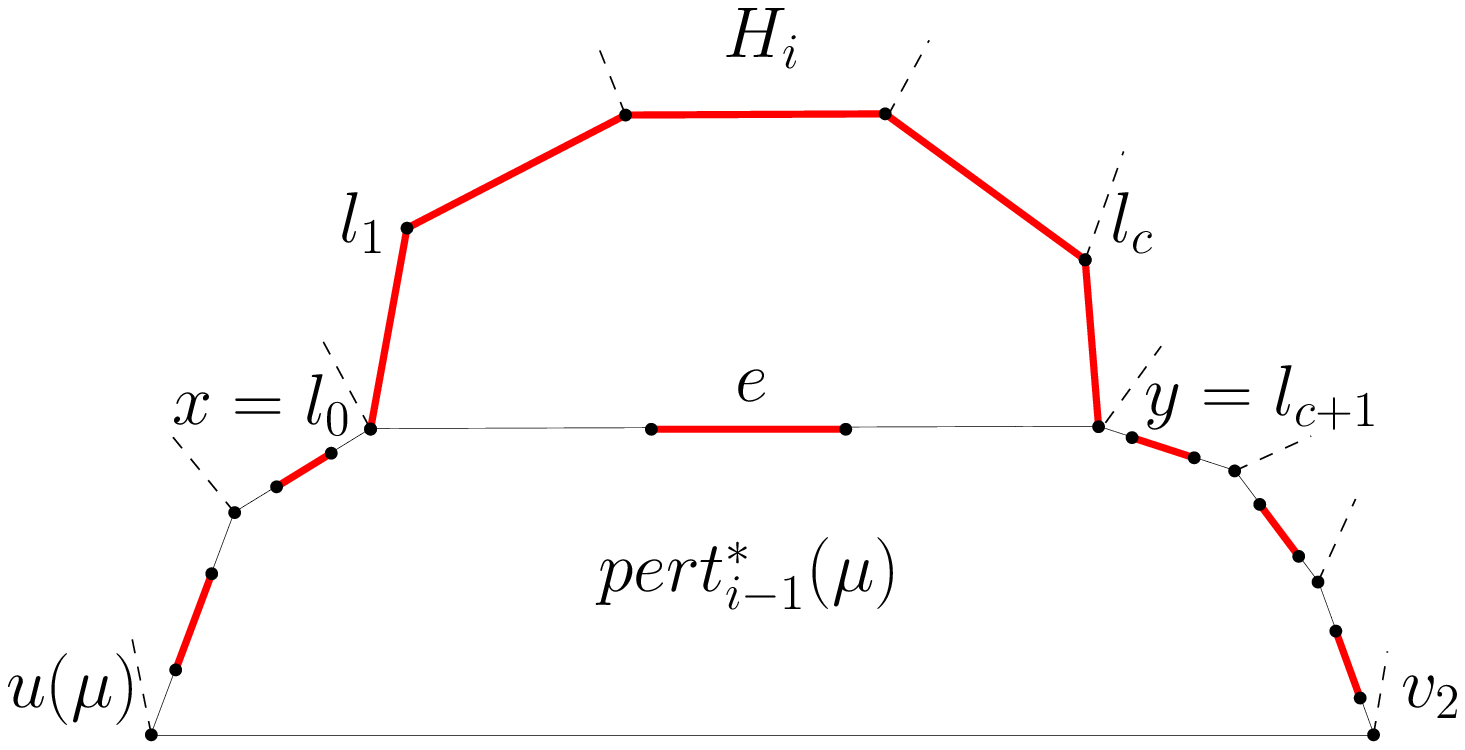}}
  \hfil
  \subfloat{ \includegraphics[height=0.16\textwidth,page=2]{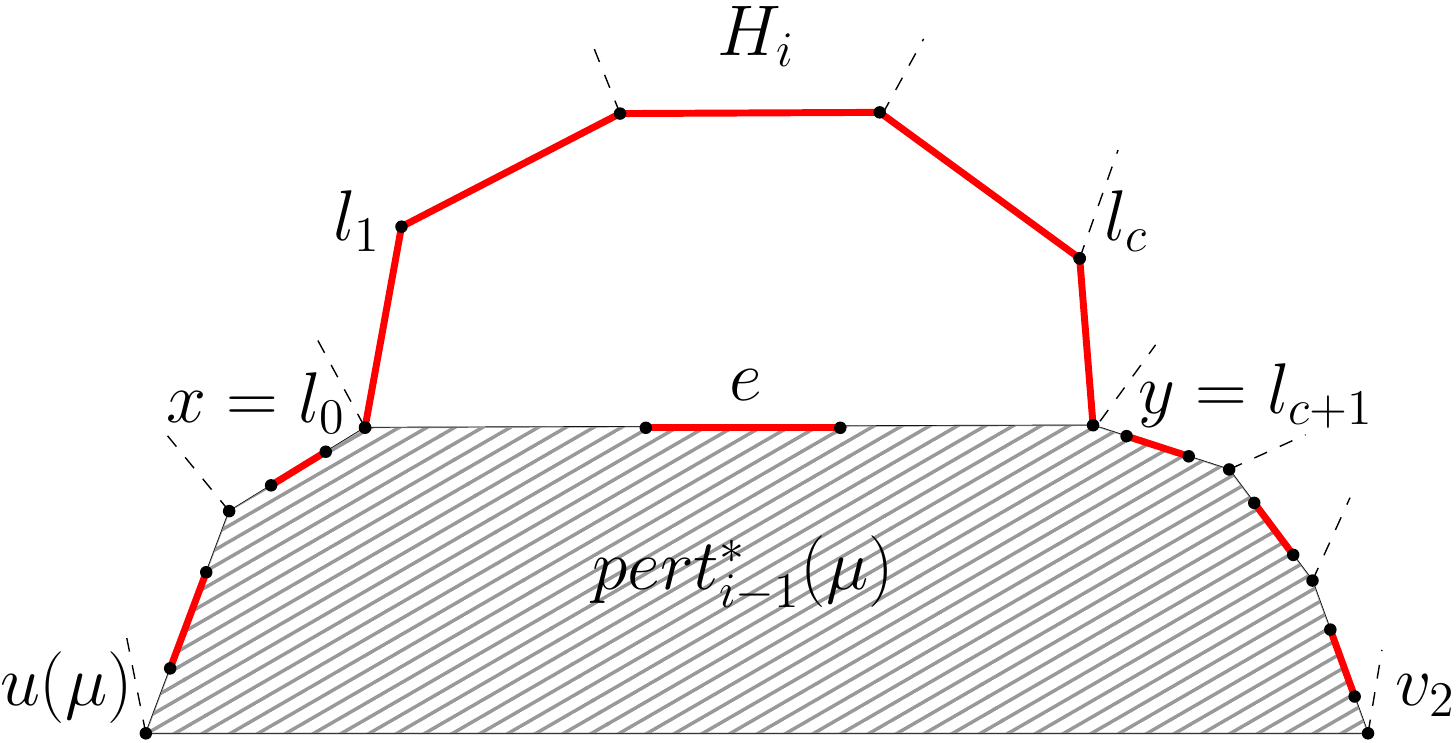}}
  \hfil
  \subfloat{ \includegraphics[height=0.16\textwidth,page=3]{fig/ext-R-node}}
  \caption{
Augmentation of an R-node via $H$-splits for the next part of the canonical ordering. Free edges are red, vertices and edges created by $H$-splits are green.
    }
  \label{fig:ext-R-node}
\end{figure}

Since the described augmentation only uses $H$-splits, it does not increase the minimum and maximum degree beyond $3$. We have the following main result.

\begin{lemma}
  \label{lem:augmentation}
  Let~$G=(V,E)$ be a $2$-connected planar graph with minimum degree $\delta(G)\geq 3$ and maximum degree $\Delta(G)$.  There exist disjoint pairs $(e'_1,e''_1)$, $\ldots$, $(e'_k,e''_k)$ of edges in $E$ such that performing the $H$-splits
  $\langle e'_1,e''_1\rangle$, $\ldots$, $\langle e'_k,e''_k\rangle$ yields a $3$-connected planar graph $G'$ with $\delta(G')=\delta(G)$ and $\Delta(G')=\Delta(G)$.
\end{lemma}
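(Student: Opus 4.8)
The plan is to root the SPQR-tree $\mathcal{T}$ of the $2$-connected graph $G$ at an arbitrary Q-node $\rho$, corresponding to a real edge $e_\rho=(s,t)$ of $G$, and to prove by a bottom-up induction over $\mathcal{T}$ the stronger statement that every non-Q-node $\mu$ admits an \emph{extendible} embedding $\mathcal E^*_\mu$ of some $\pert^*(\mu)$ obtained from $\pert(\mu)$ by edge-disjoint $H$-splits. The graph $G'$ claimed by the lemma is then produced at the root by one final $H$-split that combines a free edge supplied by $\rho$'s child with the edge $e_\rho$. Throughout, the whole construction is organized around the free-edge discipline: $H$-splits are only ever performed on edges currently flagged as free (original, not-yet-subdivided edges), which is exactly what will guarantee at the end that all split pairs are pairwise disjoint pairs of edges of $E$.

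For the induction I would first note that Q-node children contribute their single real edge as a free edge and are externally $3$-connectible by the first clause of the definition; these are the leaves that feed free edges into the construction. I would then treat the three internal node types. The S- and P-node cases are the easy ones, handled as in Fig.~\ref{fig:ext-SP-node}: using a regular embedding of $\skel(\mu)$ I lay out the extendible augmentations of the children in series, respectively in parallel, and perform a single stitching $H$-split between free edges coming from two consecutive children (S-node) or from two of the parallel children (P-node). A short inspection of the resulting faces then shows that exactly two free edges remain incident to one of the two faces bounding the parent edge and one free edge to the other, supplying the lists $L_\mu=[e_1,e_2]$ and $R_\mu=[e_3]$; and that the $H$-split on $e_1$ together with the parent edge turns $\pert(\mu)$ into a subdivision of a $3$-connected graph whose only degree-$2$ vertices are the poles, so that $\mathcal E^*_\mu$ is extendible.

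The technical heart of the argument is the R-node case. Here I would fix a canonical ordering $v_1,\dots,v_n$ of $\skel(\mu)$ and build the extendible embedding incrementally, adding one vertex at a time as in Fig.~\ref{fig:ext-R-node}; each newly added vertex and its incident skeleton edges are replaced by the extendible augmentations of the corresponding children. The invariant I maintain is that every child processed so far contributes a free edge to the current outer face, so that the next step always has the free edges it needs to perform the $H$-splits that destroy the separation pairs inherited from $\skel(\mu)$. Here $3$-connectivity of $\skel(\mu)$ is precisely what guarantees both that a canonical ordering exists and that, away from the poles, no degree-$2$ vertices survive; the case in which the starting edge of the canonical ordering is not a Q-node must be treated separately, by processing the expansion graph of that edge first. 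The main obstacle I anticipate is exactly this bookkeeping, namely keeping the residual free edges on the correct faces at every step so that extendibility is preserved while never reusing an edge across two $H$-splits.

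Finally, at the root I would apply the inductive conclusion to the unique child $\nu$ of $\rho$, whose poles are $s$ and $t$, and perform the $H$-split $\langle e_1,e_\rho\rangle$ for some $e_1\in L_\nu$. By external $3$-connectibility this yields a subdivision of a $3$-connected planar graph whose only possible degree-$2$ vertices are $s$ and $t$; but $\delta(G)\ge 3$ forces $\deg(s),\deg(t)\ge 3$ even after the split, so in fact no degree-$2$ vertex remains and $G'$ is genuinely $3$-connected. A concluding remark records that every $H$-split leaves the degrees of all original vertices unchanged and introduces only new vertices of degree three, so it neither raises the maximum degree nor lowers the minimum degree below three, giving $\Delta(G')=\Delta(G)$ and $\delta(G')=\delta(G)$; together with the edge-disjointness enforced by the free-edge discipline this completes the proof.
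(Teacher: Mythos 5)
Your proposal follows essentially the same route as the paper's proof: root the SPQR-tree at a Q-node, inductively maintain extendible augmentations of each pertinent graph (two free edges on one face incident to the parent edge, one on the other, together with external $3$-connectibility), stitch the children's augmentations with $H$-splits in the S- and P-node cases, build the R-node case incrementally along a canonical ordering with special treatment of the starting edge, and finish with a single $H$-split involving the root edge. The only notable deviation is that the paper roots at a Q-node whose unique child is an R-node (which it observes exists since $\delta(G)\ge 3$), whereas you root at an arbitrary Q-node and instead close the resulting gap at the poles by invoking $\delta(G)\ge 3$ to rule out degree-$2$ vertices after the final split---this is immaterial given the inductive invariant, and your treatment of that final step is in fact slightly more explicit than the paper's.
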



\section{Hardness Results Based on $2$-Good Augmentations}
\label{sec:hardness-augmentation}

In this section we give examples on how to exploit Lemma~\ref{lem:augmentation} to extend NP-hardness results from the class of $2$-connected planar graphs with minimum degree $3$ and bounded maximum degree to that of $3$-connected planar graphs with bounded maximum degree.
The general line is as follows.
Given an NP-hard decision problem $\mathcal P$ which takes as input a planar graph $G$ and, possibly, a parameter $k$, one just needs to define a graph augmentation which
(i) is $2$-good
and (ii) replaces a pair of edges $\langle e',e''\rangle$ of $G$ with a gadget $A(\langle e',e''\rangle)$ of polynomial size to obtain a
graph $G'$ such that $(G,k)$ is a \texttt{yes} instance for problem $\mathcal P$ if and only if $(G',k')$ is a \texttt{yes} instance for problem $\mathcal P$,  where $k' = f(k)$ and $f$ is a computable polynomial function.

\subsubsection*{Maximum Independent Set.}

For a graph~$G$ we denote by~$\alpha(G)$ the size of a largest independent set in~$G$.

\begin{lemma}
  \label{lem:independent-increase}
  Let~$G$ be a $2$-connected cubic planar graph and let~$e$ and~$e'$ be two edges in~$E(G)$ incident to the same face of a planar embedding of $G$.  Let~$G'$ be the $2$-connected cubic planar graph obtained from $G$ by applying the graph augmentation illustrated in Fig.~\ref{fig:mis-cubic-gadget} to~$e$
  and~$e'$.  Then~$\alpha(G') = \alpha(G) + 5$.
\end{lemma}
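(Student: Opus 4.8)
The plan is to prove the two inequalities $\alpha(G') \geq \alpha(G) + 5$ and $\alpha(G') \leq \alpha(G) + 5$ separately, using a careful analysis of the gadget $A(\langle e, e'\rangle)$ that replaces the two edges. Since I cannot see Fig.~\ref{fig:mis-cubic-gadget}, I infer from the statement that the augmentation removes edges $e=(u,v)$ and $e'=(w,z)$ and inserts a planar gadget on these four boundary vertices, adding some number of internal vertices while keeping the graph cubic; the constant $5$ should equal the number of new vertices that any optimal solution can safely take from inside the gadget. The strategy in both directions is to localize the argument to the gadget: an independent set of $G'$ restricted to the four attachment vertices $\{u,v,w,z\}$ together with its behavior inside the gadget, versus independent sets of $G$ that use the endpoints of $e$ and $e'$.

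For the lower bound $\alpha(G') \geq \alpha(G)+5$, I would start from a maximum independent set $S$ of $G$ and show how to extend it into the gadget. The first step is to observe that the attachment vertices $u,v,w,z$ have their ``outside'' neighborhoods (in $G-\{e,e'\}$) unchanged by the augmentation, so any independent set $S$ of $G$ induces a feasible partial selection on these four vertices in $G'$. The key combinatorial claim is then: for every one of the (at most) $2^4$ independence-respecting patterns of $\{u,v,w,z\}$ that can arise from an independent set of $G$ (recalling that $e=(u,v)$ and $e'=(w,z)$ forbid simultaneously choosing $u,v$ or $w,z$ in $G$ but this constraint is relaxed in $G'$), the interior of the gadget admits an independent set of size exactly $5$ that is compatible with that boundary pattern. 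Verifying this amounts to a finite case check on the fixed gadget, which I would present as a table or as a single ``canonical'' interior selection that works regardless of the boundary.

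For the upper bound $\alpha(G') \leq \alpha(G)+5$, I would take a maximum independent set $S'$ of $G'$ and push it back to an independent set of $G$. The crucial step is to bound the number of gadget-internal vertices in $S'$ by $5$ for \emph{every} admissible boundary pattern, and simultaneously to repair the boundary so that it becomes legal for $G$: if $S'$ happens to contain both $u$ and $v$ (which is allowed in $G'$ since $e$ was deleted but illegal in $G$), I must show one can trade this configuration for one that uses at most one of $u,v$ while losing at most the corresponding amount from the interior count, so that the net contribution back to $G$ plus the surplus $5$ is preserved. Concretely, I would argue that $\big|S' \cap \big(\{u,v,w,z\}\cup \text{interior}\big)\big| \leq \big|S'\cap\{u,v,w,z\}\text{ made }G\text{-legal}\big| + 5$, again by the finite case analysis on the gadget.

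I expect the main obstacle to be the upper bound, and specifically the boundary-repair step: one must rule out that a clever global independent set exploits the relaxation of the constraints $e,e'$ to gain more than $5$ vertices inside the gadget, or that deleting a forbidden boundary vertex to make $S'$ legal for $G$ forces the loss of a vertex that was ``paying for itself'' elsewhere. This is where the precise adjacency structure of the gadget in Fig.~\ref{fig:mis-cubic-gadget} is essential, and the argument reduces to checking that the maximum independent set of the gadget subject to each boundary constraint never exceeds $5$ plus the boundary gain obtainable in $G$. Since the gadget is of fixed constant size, this exhaustive verification is routine but must be stated carefully; the cubic-degree requirement is what keeps the gadget small and the case analysis finite.
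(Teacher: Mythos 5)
Your high-level strategy coincides with the paper's: prove the two inequalities separately, localize everything to the gadget, extend an independent set of $G$ into the gadget for $\alpha(G')\ge\alpha(G)+5$, and for $\alpha(G')\le\alpha(G)+5$ repair illegal boundary configurations while bounding the interior contribution by $5$. The genuine gap is that your proposal never executes any of this: every substantive claim is deferred to a ``finite case check on the fixed gadget'' that is not (and, absent the figure, cannot be) performed, so what you have is a proof template, not a proof. The entire content of the paper's argument is exactly that check. For the lower bound it exhibits boundary-dependent selections: add $v'$ if $u\notin I$ and $u'$ otherwise, and add $\{y',y'',a_3,b_3\}$ or $\{x',x'',a_2,b_2\}$ according to the status of $x$ and $y$. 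For the upper bound it gives explicit exchange arguments (replace $a_1$ by one of $a_2,a_3$, and symmetrically for $b_1$; if $u,v\in I'$ swap $v$ for $u'$; if $x,y\in I'$ a strictly larger legal set exists), and then the counting step: the twelve interior vertices are covered by the five sets $\{u',v'\}$, $\{a_1,a_2,a_3\}$, $\{b_1,b_2,b_3\}$, $\{x',y''\}$, $\{x'',y'\}$, each of which meets an independent set at most once, whence $|I'\cap V(G)|\ge |I'|-5$. Identifying that such a decomposition of the interior into five ``at most one'' sets exists is the key idea of the upper bound, and it does not appear in your proposal even in outline.

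One concrete step in your plan would actually fail: the parenthetical suggestion that the lower bound might be witnessed by ``a single canonical interior selection that works regardless of the boundary.'' By the partition above, any interior independent set of size $5$ must contain exactly one of $u',v'$; since the gadget is cubic, $u$ and $v$ each have exactly one neighbour inside the gadget, and these neighbours are $v'$ and $u'$ respectively, so the choice from $\{u',v'\}$ is forced by which of $u,v$ lies in $I$. As both boundary patterns arise from independent sets of $G$, no fixed selection is compatible with all of them; the selection must adapt to the boundary, as the paper's does. A smaller quibble: the case analysis is finite because the gadget has constant size, not because the graph is cubic; cubicity matters only so that $G'$ remains cubic and the reduction stays inside the intended graph class.
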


\begin{proof}

  \begin{figure}[tb!]
    \centering
    \subfloat[]{
      \includegraphics[height=0.12\textheight]{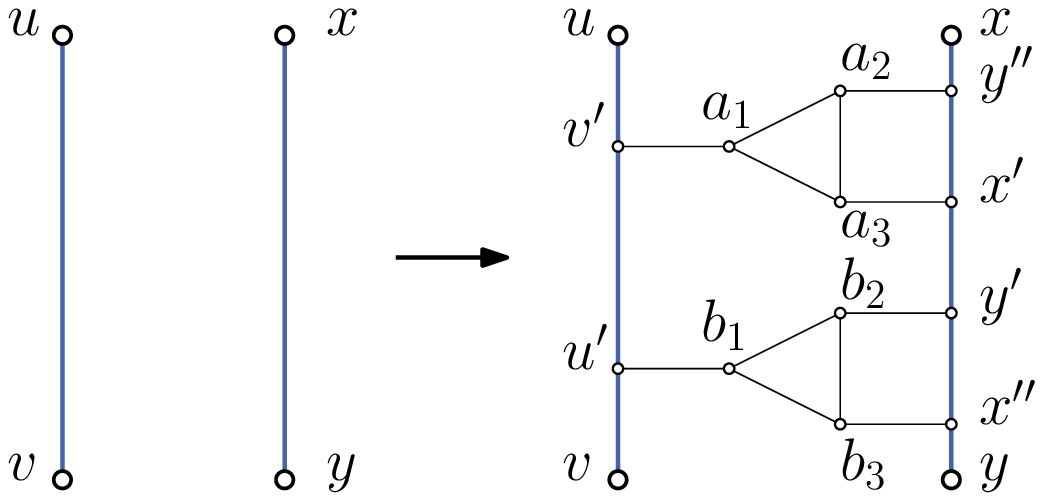}
      \label{fig:mis-cubic-gadget}
    }
    \hfil
    \subfloat[]{
      \includegraphics[page=2]{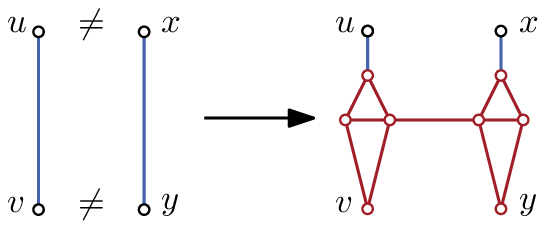}
      \label{fig:A1}
    }
    \caption{ (a) Gadget for the proofs of
      Lemma~\ref{lem:independent-increase}.  (b) Gadget $A_1(\langle
      e_1,e_2\rangle)$ used in the hardness proof of $3$-Coloring.
      Before the substitution the endpoints $u,v$ and $x,y$ of $e_1$
      and $e_2$, respectively, are exchanged so that $u \ne x$ holds.}
  \end{figure}

  Let~$I$ be an independent set in~$G$.  We construct an independent set $I'$ of~$G'$ as follows.  We start with~$I$.  If~$I$ does not contain~$u$, we add~$v'$ to it; otherwise it does not contain~$v$, and we can add~$u'$.  Similarly, if~$I$ does not contain~$x$, we add~$y',y''$ as well as~$a_3$
  and~$b_3$; otherwise it does not contain~$y$, and we add~$x',x''$ as well as~$a_2$ and~$b_2$.  Clearly~$I'$ is an independent set of size~$|I|+5$, showing that~$\alpha(G') \ge \alpha(G) + 5$.

  Conversely, assume that~$I'$ is an independent set of~$G'$.  First assume that~$a_1 \in I'$.  Note that~$I''$ cannot contain both~$x'$ and~$y''$.  It follows that we can replace~$a_1$ by either~$a_2$ or by~$a_3$ to obtain an independent set of the same size.  An analogous argument applies
  to~$b_1$.  We can hence assume without loss of generality that neither~$a_1$ nor~$b_1$ is contained in~$I'$.  Now assume that~$I'$ contains both~$u$ and~$v$.  It then follows that~$u'$ and~$v'$ are not in~$I'$, and hence~$(I' \setminus \{v\}) \cup \{u'\}$ is an independent set
  (since~$b_1 \not \in I'$) of the same size.  We can hence also assume that~$I'$ contains at most one vertex in~$\{u,v\}$.

  Next assume that~$\{x,y\} \subseteq I'$.  Then~$I'$ contains at most one vertex in~$\{x',y'$, $x'',y''\}$, at most one vertex in~$\{a_2,a_3\}$, and at most one vertex in~$\{b_2,b_3\}$.  Then~$I' \setminus (\{x,y\} \cup \{x',y',x'',y''\} \cup \{a_2,a_3\} \cup \{b_2,b_3\}) \cup \{x,x',x'',a_2,b_2\}$ is
  a larger independent set containing only one vertex in~$\{x,y\}$.  Hence, we can also assume without loss of generality that~$I'$ contains at most one vertex in~$\{x,y\}$.

  It follows from the above that, after suitably transforming~$I'$, the set~$I' \cap V(G)$ is an independent set in~$G$.  Now observe that~$I'$ can contain at most one vertex from each of the sets~$\{u',v'\}$, $\{a_1,a_2,a_3\}$, $\{b_1,b_2,b_3\}$, $\{x',y''\}$, and~$\{x'',y'\}$.  It follows
  that~$|I' \cap V(G)| \ge |I'| - 5$.  In particular, this implies~$\alpha(G) \ge \alpha(G') - 5$, or, equivalently,~$\alpha(G') \le \alpha(G) + 5$.
\end{proof}

By applying Lemma~\ref{lem:independent-increase} to the distinct pairs of edges of a biconnected cubic planar graph (for which \mis is NP-complete~\cite{mohar-fcgpag-01}) determined by Lemma~\ref{lem:augmentation} we obtain the following.

\begin{theorem}
  \label{thm:mis-cubic}
  \mis is NP-complete for $3$-connected cubic planar graphs.
\end{theorem}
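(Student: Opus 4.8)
The plan is to give a polynomial-time reduction from \mis on $2$-connected cubic planar graphs, which is NP-complete~\cite{mohar-fcgpag-01}, following the general scheme of Section~\ref{sec:hardness-augmentation}. Membership in NP is immediate, since independence of a vertex subset can be checked in polynomial time, so it suffices to establish NP-hardness. Given a $2$-connected cubic planar graph $G$ together with a target size $t$, I would first apply Lemma~\ref{lem:augmentation} to compute disjoint pairs $(e'_1,e''_1),\ldots,(e'_k,e''_k)$ of edges of $G$ whose $H$-splits turn $G$ into a $3$-connected cubic planar graph. Instead of performing the $H$-splits, I would replace each such pair by the gadget of Fig.~\ref{fig:mis-cubic-gadget} (Lemma~\ref{lem:independent-increase}), obtaining a graph $G'$, and output the instance $(G',t+5k)$.

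For the independent-set count, I would apply Lemma~\ref{lem:independent-increase} once per pair. Since the pairs are edge-disjoint and each gadget is inserted locally inside a single face incident to the two replaced edges, the substitutions do not interfere, and each of them keeps the graph $2$-connected, cubic, and planar while increasing the independence number by exactly $5$. Iterating over all $k$ pairs therefore yields $\alpha(G')=\alpha(G)+5k$, so $G$ has an independent set of size at least $t$ if and only if $G'$ has one of size at least $t+5k$. As $k$ and the gadgets are of polynomial size and Lemma~\ref{lem:augmentation} is constructive, the reduction runs in polynomial time.

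The main obstacle is to argue that $G'$ is actually $3$-connected, since the gadget of Lemma~\ref{lem:independent-increase} is considerably richer than an $H$-split. The plan is to compare $G'$ with the graph $G_H$ obtained from $G$ by performing the $H$-splits on the same pairs, which is $3$-connected by Lemma~\ref{lem:augmentation}. Two properties of the gadget drive the argument: first, it is $2$-good, so it creates no new separation pairs; second, it connects its attachment vertices at least as strongly as the corresponding $H$-split, that is, it contains a subdivision of the $H$-split with the same attachment vertices. I would then suppose for contradiction that $\{p,q\}$ is a $2$-cut of $G'$, which is clearly $2$-connected, and distinguish cases. If $p$ or $q$ lies strictly inside a gadget, the $2$-goodness of the gadget yields a contradiction. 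Otherwise $p$ and $q$ are both original vertices of $G$; since every gadget connects its attachment vertices at least as well as the sparser $H$-split does, the pair $\{p,q\}$ would also separate $G_H$, contradicting its $3$-connectivity. Combined with the cubic and planar properties guaranteed by Lemma~\ref{lem:independent-increase}, this shows that $G'$ is a $3$-connected cubic planar graph, completing the reduction and hence the proof.
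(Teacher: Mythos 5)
Your proposal is correct and takes essentially the same route as the paper: reduce from \mis on $2$-connected cubic planar graphs, use Lemma~\ref{lem:augmentation} to determine the disjoint edge pairs, replace each pair by the gadget of Fig.~\ref{fig:mis-cubic-gadget} instead of an $H$-split, and apply Lemma~\ref{lem:independent-increase} iteratively to get $\alpha(G')=\alpha(G)+5k$. Your explicit argument for the $3$-connectivity of $G'$ (no $2$-cut can involve a gadget interior, and a $2$-cut among original vertices would transfer to the $3$-connected $H$-split graph since each gadget contains a subdivision of the corresponding $H$-split) is a welcome elaboration of a step the paper merely asserts as ``easy to see.''
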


\begin{wrapfigure}[9]{r}{.4\textwidth}
\vspace{-0.55cm}
  \centering
  \includegraphics[height=0.15\textheight]{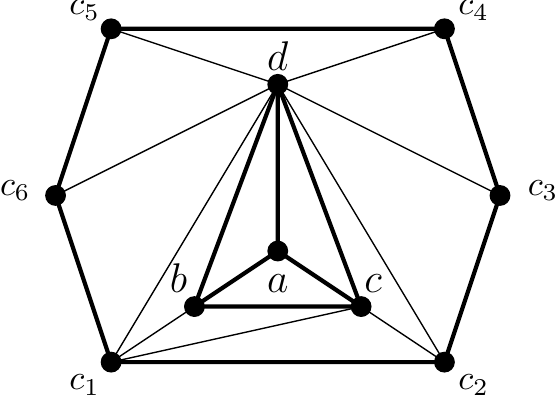}
  \caption{Gadget $\Phi_6$ for placement inside a face of size~$6$.}
  \label{fig:mis-triangulation-inner}
\end{wrapfigure}

\subsubsection*{$3$-Coloring.}

Let $\mathcal C(G)$ denote the number of cut vertices of a graph $G$.
Given a pair of edges $\langle e_1 = (u,v),e_2 = (x,y)\rangle$ of a
planar graph $G$, we define gadget $A_1(\langle e_1,e_2\rangle)$ as the
graph illustrated in Fig.~\ref{fig:A1}, where we assume that $u \neq
x$. Observe that gadget $A_1(\langle e_1,e_2\rangle)$ is $2$-good.  We
first prove an auxiliary lemma.

\begin{lemma}\label{lem:coloring-strenghten}
  The {\sc $3$-Coloring} problem is NP-complete for $2$-connected planar graphs with minimum degree $4$ and maximum degree $7$.
\end{lemma}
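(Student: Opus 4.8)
The plan is to prove NP-completeness of {\sc 3-Coloring} on $2$-connected planar graphs with degrees in the range $[4,7]$ by a reduction from {\sc 3-Coloring} on general planar graphs of bounded degree, which is classically NP-complete (indeed even for planar graphs of maximum degree $4$). Membership in NP is immediate, since a candidate coloring can be verified in polynomial time, so the whole effort goes into hardness. The strategy is to take an arbitrary planar input graph $G$ and transform it, via gadget substitutions built around the $A_1(\langle e_1,e_2\rangle)$ gadget of Fig.~\ref{fig:A1}, into a $2$-connected planar graph $G'$ whose degrees all lie in $\{4,\dots,7\}$, in such a way that $G'$ is $3$-colorable if and only if $G$ is.

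The key steps I would carry out are as follows. First I would establish the color-transfer property of the gadget: when $A_1(\langle e_1,e_2\rangle)$ replaces a pair of edges $e_1=(u,v)$, $e_2=(x,y)$, any valid $3$-coloring of the gadget forces a prescribed relationship between the colors on $\{u,v,x,y\}$, so that from the outside the gadget behaves exactly like the two original edges (i.e., it enforces $c(u)\neq c(v)$ and $c(x)\neq c(y)$ and imposes no further constraint, up to the relabeling ``$u\neq x$'' mentioned after Fig.~\ref{fig:A1}). This equivalence, proved by exhibiting the finitely many colorings of the gadget, gives the correctness of the reduction: colorings of $G$ extend to colorings of $G'$ and vice versa. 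Second, I would verify that the substitution raises the minimum degree to at least $4$: every original vertex whose degree is too small, and every internal vertex of the gadget, must end up with degree in the admissible window. Where $G$ has low-degree vertices I would attach degree-raising gadgets that are themselves $3$-colorable without constraining their attachment vertex beyond the edge constraints, pairing up edges as needed so that the $2$-good gadget $A_1$ can be applied to disjoint pairs. Third, I would check the maximum-degree bound: the gadget is designed so that no vertex of $G'$ exceeds degree $7$, which bounds how many gadget edges may be incident to any single vertex and hence dictates how the edge-pairs are chosen. Finally, I would confirm that $G'$ is $2$-connected, using the $2$-good property (the gadget does not create new small cuts, cf.\ $\mathcal C(G)$) together with planarity, which is preserved because each gadget is planar and is inserted in the face incident to $e_1$ and $e_2$.

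The main obstacle I expect is the simultaneous satisfaction of all four constraints by a single gadget design: the coloring gadget must (i) faithfully transmit the edge constraints with no spurious forcing, (ii) push every incident vertex's degree up to at least $4$, (iii) keep every degree at most $7$, and (iv) preserve $2$-connectivity and planarity. These requirements pull against each other, since adding edges to raise the minimum degree tends to raise the maximum degree and to risk creating cut vertices or unwanted color forcing. The delicate part is the explicit verification of property (i): one must enumerate the colorings of $A_1$ (up to symmetry and up to the swap that ensures $u\neq x$) and confirm that the only constraints propagated to the boundary are precisely $c(u)\neq c(v)$ and $c(x)\neq c(y)$. Once that color-forcing analysis is pinned down, the degree and connectivity bookkeeping follows from the fixed structure of the gadget and from applying it to disjoint edge-pairs, so I would present the coloring analysis first and treat the structural invariants as corollaries of the gadget's construction.
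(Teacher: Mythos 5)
There is a genuine gap, and it sits exactly where your proposal waves its hands: 2-connectivity. You claim that $G'$ is $2$-connected ``using the 2-good property,'' but $2$-goodness only guarantees that an augmentation does not \emph{increase} the number of $2$-cuts; it can never remove a cut vertex that is already present. Your starting instances (planar graphs of maximum degree $4$) can have arbitrarily many cut vertices, bridges, and even degree-$1$ vertices, so after all your substitutions the resulting graph need not be $2$-connected at all. The paper's proof hinges on an \emph{active} cut-vertex elimination step that your proposal lacks: first each cut vertex incident to more than two blocks is replaced by the gadget $\mathcal S(v)$ of Fig.~\ref{fig:cutvertex-coloring-b} (so every cut vertex touches exactly two blocks, with $\Delta \le 5$), and then, at each remaining cut vertex $v$, the gadget $A_1$ is applied to a pair of edges $(v,x),(v,y)$ lying in \emph{different} blocks. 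Placing $A_1$ across two blocks merges them, so $\mathcal C(G'')=\mathcal C(G')-1$, and $2$-goodness is invoked only to certify that no \emph{new} cut vertices appear while this count is driven to zero. Iterating this is what produces a $2$-connected graph; nothing in your construction plays this role.

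The second gap is the minimum-degree requirement. The paper avoids it entirely by reducing from {\sc $3$-Coloring} on $4$-regular planar graphs (Dailey's theorem), so $\delta \ge 4$ holds from the start and the only remaining work is connectivity; the degree analysis is then just bookkeeping ($\Delta \le 5+2 = 7$, since each $A_1$ substitution raises an endpoint's degree by at most $1$ and each edge is used in at most two substitutions). You instead reduce from maximum-degree-$4$ planar graphs and defer the low-degree vertices to unspecified ``degree-raising gadgets.'' This is not a detail: a gadget attached at a single vertex creates a cut vertex (undermining your goal), and any gadget that raises a vertex's degree by $3$ or more must be shown not to force colors, not to break planarity, and not to push neighboring degrees past $7$ --- precisely the tension you yourself identify but do not resolve. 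Choosing the right starting problem is the idea that makes the paper's proof short; without it, your reduction has two interacting unsolved subproblems (degree raising and cut-vertex removal) rather than one solved one.
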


We can now exploit Lemma~\ref{lem:coloring-strenghten} and Lemma~\ref{thm:mis-cubic} and the fact that gadget $A_1$ is $2$-good to obtain the following theorem. 
  
\begin{theorem}
    The {\sc $3$-Coloring} problem is NP-complete for $3$-connected planar graphs with minimum degree $4$ and maximum degree $7$.
  \end{theorem}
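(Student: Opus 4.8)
The plan is to prove NP-hardness by a polynomial reduction from the restricted version of \textsc{3-Coloring} established in Lemma~\ref{lem:coloring-strenghten}, on $2$-connected planar graphs $G$ with $\delta(G)=4$ and $\Delta(G)=7$; membership in NP is immediate, since a $3$-coloring is a polynomial-size certificate. Given such an instance $G$, I would first invoke Lemma~\ref{lem:augmentation}: as $\delta(G)=4\geq 3$, it produces disjoint pairs $\langle e'_1,e''_1\rangle,\dots,\langle e'_k,e''_k\rangle$ of edges, each incident to a common face, whose $H$-splitting yields a $3$-connected planar graph. Rather than performing the plain $H$-splits — which would introduce degree-$3$ vertices and destroy the coloring instance — I would substitute the gadget $A_1(\langle e'_i,e''_i\rangle)$ of Fig.~\ref{fig:A1} at each pair, after exchanging endpoints as prescribed so that $u\neq x$ holds. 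This produces in polynomial time a planar graph $G'$, and it remains to verify its degrees, its colorability, and its connectivity.

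For \emph{degrees}, each substitution deletes $e'_i,e''_i$ and re-attaches the gadget to their endpoints; by construction $A_1$ keeps the degree of every affected endpoint and of every internal vertex within $[4,7]$, so $\delta(G')=4$ and $\Delta(G')=7$ are preserved. For \emph{$3$-colorability}, the gadget $A_1$ is built to be $3$-colorable exactly when the two adjacency constraints imposed by the replaced edges $e'_i$ and $e''_i$ can be met simultaneously; hence a $3$-coloring of $G'$ restricts to one of $G$ and, conversely, any $3$-coloring of $G$ extends over the gadgets, so $G$ is a \texttt{yes}-instance iff $G'$ is.

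The part demanding the most care is showing that $G'$ is $3$-connected. Here I would use that $A_1$ is $2$-good together with the observation that each gadget joins its two edge pairs through a connection that subdivides exactly the link an $H$-split would add. Consequently each gadget eliminates the same $2$-cuts that the corresponding $H$-split eliminates, while $2$-goodness guarantees that no new $2$-cut appears. Since Lemma~\ref{lem:augmentation} certifies that $H$-splitting the chosen pairs leaves no $2$-cut, treating those pairs with $A_1$ instead also leaves no $2$-cut, so $G'$ is $3$-connected.

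I expect this connectivity step to be the main obstacle, because the definition of $2$-good only forbids \emph{creating} $2$-cuts and says nothing about \emph{removing} the targeted ones. The cleanest way around this is to exhibit, inside each copy of $A_1$, a subdivision of the $H$-split structure (the two subdivision vertices joined by an edge); then the graph $G_H$ obtained by the plain $H$-splits appears as a topological substructure of $G'$, and combining this with $2$-goodness forces $G'$ to inherit the $3$-connectivity that Lemma~\ref{lem:augmentation} guarantees for $G_H$.
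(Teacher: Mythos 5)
Your proposal matches the paper's proof: the paper derives this theorem in a single sentence by combining Lemma~\ref{lem:coloring-strenghten}, the edge pairs supplied by Lemma~\ref{lem:augmentation}, and the fact that the gadget $A_1$ is $2$-good and preserves $3$-colorability, which is exactly the reduction you describe. Your elaboration of the $3$-connectivity step --- exhibiting a subdivision of each $H$-split inside the corresponding copy of $A_1$, so that the $3$-connected graph guaranteed by Lemma~\ref{lem:augmentation} sits inside $G'$ as a topological substructure, with $2$-goodness ruling out separation pairs involving gadget-internal vertices --- is precisely the argument the paper leaves implicit.
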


\section{\mbox{Other Hardness Results for $3$-Connected Planar  Graphs}}
\label{sec:hardness-other}

In this section we present further strengthenings of hardness results to $3$-connected planar graphs.  The difference to the previous section is that the proofs do not make use of the graph augmentation technique of Lemma~\ref{lem:augmentation}.

By replacing one non-triangular face of size $l$ of a plane graph by the gadget $\Phi_l$, see Fig.~\ref{fig:mis-triangulation-inner}, we can prove the following.

\begin{lemma}
  \label{lem:mis-triangulation-gadget}
  Let~$G$ be a $2$-connected plane graph and $f_{\geq 4}(G)$ be the
  number of faces of $G$ whose size is larger than $3$. There exists a
  $2$-connected plane graph $G'$ such that (i)
  $\alpha(G')=\alpha(G)+1$ and (ii)
  $f_{\geq 4}(G')= f_{\geq 4}(G) -1$.
\end{lemma}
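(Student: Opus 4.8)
The plan is to build $G'$ by inserting the gadget $\Phi_l$ (Fig.~\ref{fig:mis-triangulation-inner}) into a single non-triangular face $f$ of $G$ of size $l\ge 4$; such a face exists whenever $f_{\ge 4}(G)\ge 1$, which is the only interesting case. Writing $w_1,\dots,w_l$ for the boundary cycle of $f$, I would identify the outer boundary of $\Phi_l$ with this cycle and draw the gadget inside $f$, so that $G$ is a subgraph of $G'$ and the only edges of $G'$ among vertices of $V(G)$ are the edges of $G$ itself (the annulus is triangulated using only inner--inner and inner--boundary edges, never a new boundary chord). The first thing to check is property (ii): by construction $\Phi_l$ triangulates the interior of $f$, i.e.\ every bounded face created inside $f$ is a triangle and no boundary edge $w_iw_{i+1}$ remains on a face of size $>3$. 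Since no face of $G$ other than $f$ is touched, $f$ is the unique large face that disappears and no new large face is created, giving $f_{\ge 4}(G')=f_{\ge 4}(G)-1$. I would also record here that $G'$ is plane by construction and stays $2$-connected: the added vertices each have degree $\ge 3$ and attach to several boundary vertices, so no cut vertex is introduced and the boundary cycle of $f$ remains intact.

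For property (i) I would isolate the two structural features of $\Phi_l$ on which everything rests: (A) its internal vertices pairwise conflict, so that any independent set of $G'$ contains at most one of them (concretely, the four internal vertices form a clique), and (B) there is an internal vertex $z$ with no neighbour on the boundary $w_1,\dots,w_l$ (the ``centre'' of the gadget). The lower bound $\alpha(G')\ge\alpha(G)+1$ is then immediate: given a maximum independent set $I$ of $G$, the set $I\cup\{z\}$ is independent in $G'$, since $z$'s only neighbours are the other internal vertices, none of which lie in $I\subseteq V(G)$. For the upper bound, take a maximum independent set $I'$ of $G'$. Since the edges of $G'$ among $V(G)$ are exactly those of $G$, the trace $I'\cap V(G)$ is independent in $G$ and hence has size at most $\alpha(G)$; and by (A), $I'$ contains at most one internal vertex. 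Therefore $|I'|\le\alpha(G)+1$, and combining the two bounds yields $\alpha(G')=\alpha(G)+1$.

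The part that needs the most care --- and the reason the gadget is nontrivial --- is guaranteeing features (A) and (B) simultaneously with a clean triangulation for \emph{every} face size $l\ge 4$. Insulating a vertex $z$ from the boundary forces at least one layer of internal vertices between $z$ and the cycle of $f$, while keeping all internal vertices mutually adjacent caps their number at four by planarity; the main obstacle is thus to verify that four internal vertices (the $K_4$ on $z$ together with an enclosing triangle) can always be wired to the $l$ boundary vertices so that the annular region between the inner triangle and the $l$-gon is fully triangulated. I would handle this uniformly by distributing the $w_i$ among the three non-central internal vertices into contiguous arcs and triangulating each resulting sub-region, checking the small cases $l=4,5$ explicitly. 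Once this is in place, the independent-set accounting above is essentially forced and needs no exchange argument, in contrast to the proof of Lemma~\ref{lem:independent-increase}.
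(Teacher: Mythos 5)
Your proof is correct and follows essentially the same route as the paper: insert an internally triangulated gadget into a non-triangular face whose four internal vertices induce a $K_4$ and one of which (your $z$, the paper's $a$) has no neighbour on the face boundary, then get $\alpha(G')\le\alpha(G)+1$ from the $K_4$ and $\alpha(G')\ge\alpha(G)+1$ from the insulated vertex. The only difference is the gadget wiring: the paper's $\Phi_l$ makes one $K_4$-vertex ($d$) a hub adjacent to all $l$ boundary vertices and tucks the triangle $\{a,b,c\}$ into a single wheel face, which triangulates the interior uniformly for every $l\ge 4$ and avoids the arc-distribution case analysis you defer to the end.
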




It is known that \mis is NP-complete for 2-connected planar
graphs~\cite{mohar-fcgpag-01}.  Iteratively applying the construction
from Lemma~\ref{lem:mis-triangulation-gadget} yields the following.

\begin{theorem}
  \label{thm:mis}
  {\em \mis} is NP-complete for planar triangulations.
\end{theorem}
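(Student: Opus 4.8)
The plan is to reduce from \mis on $2$-connected planar graphs, which is NP-complete~\cite{mohar-fcgpag-01}, using Lemma~\ref{lem:mis-triangulation-gadget} as a single face-elimination step that is iterated until no non-triangular face remains. Membership in NP is immediate, since an independent set of the required size is a polynomial-size certificate verifiable in polynomial time.

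For hardness, given a $2$-connected planar instance $(G,k)$ of \mis, I would fix a planar embedding and set $t = f_{\geq 4}(G)$. Applying Lemma~\ref{lem:mis-triangulation-gadget} to $G$ yields a $2$-connected plane graph $G_1$; since the output is again $2$-connected, the hypotheses of the lemma are preserved and I can apply it again to $G_1$, and so on, obtaining a sequence $G = G_0, G_1, \dots, G_t$. By property (ii) each step decreases the number of faces of size larger than $3$ by exactly one, so after exactly $t$ steps the graph $G^* := G_t$ satisfies $f_{\geq 4}(G^*) = 0$; that is, every face of $G^*$ (including the outer face) is a triangle, so $G^*$ is a planar triangulation. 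By property (i) each step increases the independence number by exactly one, so $\alpha(G^*) = \alpha(G) + t$.

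The equivalence of instances then follows directly from the exact identity $\alpha(G^*) = \alpha(G) + t$: we have $\alpha(G) \ge k$ if and only if $\alpha(G^*) \ge k + t$, so $(G,k)$ is a yes-instance if and only if $(G^*, k+t)$ is. For the running time, Euler's formula gives that the number of faces of $G$ is $O(|V(G)|)$, hence $t = O(|V(G)|)$ and only linearly many iterations are performed; each iteration inserts a single gadget $\Phi_l$ of size polynomial in the size of the replaced face, so both the construction of each step and the total growth of the graph are polynomial, and $G^*$ is computable from $G$ in polynomial time.

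The point requiring care—already packaged inside Lemma~\ref{lem:mis-triangulation-gadget}—is that replacing one face by $\Phi_l$ triangulates exactly that face without altering any other face: the gadget attaches only to the existing boundary vertices and lies strictly in the interior of the chosen face, so the remaining non-triangular faces are untouched and the per-step accounting of $\alpha$ and $f_{\geq 4}$ composes correctly across all $t$ applications. Given the lemma, no genuine obstacle remains, and the theorem reduces to this careful bookkeeping over the iteration.
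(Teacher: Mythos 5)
Your proof is correct and follows essentially the same route as the paper: iterate Lemma~\ref{lem:mis-triangulation-gadget} over all $f_{\geq 4}(G)$ non-triangular faces to obtain a triangulation $G^*$ with $\alpha(G^*) = \alpha(G) + f_{\geq 4}(G)$, and reduce from \mis on $2$-connected planar graphs. Your additional remarks (NP membership, the Euler-formula bound on the number of iterations) only make explicit what the paper leaves implicit.
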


Next, we show that {\sc Steiner Tree} is NP-complete for 3-connected
cubic graphs.  We first show that {\sc Steiner Tree} is NP-complete
for planar graphs of maximum degree~3 by a reduction from {\sc Steiner
  Tree} in planar graphs, which is known to be
NP-complete\cite{gj-rstpnpc-77}.

\begin{lemma}
  \label{lem:2-connected-steiner-tree}
  {\sc Steiner Tree} is NP-complete for biconnected planar
  graphs of maximum degree~3.
\end{lemma}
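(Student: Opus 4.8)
The plan is to reduce from \textsc{Steiner Tree} on arbitrary planar graphs, which is NP-complete~\cite{gj-rstpnpc-77}; membership in NP is immediate, since a candidate subtree can be checked for connectivity, for containing all terminals, and for its edge count in polynomial time. Given an instance $(G,T)$, I would build $G'$ in three steps. First, I augment $G$ to a \emph{$2$-connected} planar graph $G_1$ by a planar biconnectivity augmentation, where every added edge is subdivided into a path of more than $2N\cdot|E(G_1)|$ edges (with $N$ fixed below); since we only add edges, connectivity among terminals is preserved, and since such a path is longer than the cost of any tree that spans $G'$ through the unaugmented part, no optimal Steiner tree ever uses an added path, so the optimum is unchanged. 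Second, to bound the degree I replace each vertex $v$ with $\deg(v)\ge 4$ by a cycle $C_v$ on $\deg(v)$ vertices, attaching the edges formerly incident to $v$ to distinct vertices of $C_v$ in the cyclic order given by the embedding (a local ``truncation''); this leaves vertices of degree at most $3$ untouched, and if $v$ is a terminal I declare one vertex of $C_v$ to be the terminal. Third, I subdivide every original edge into a path of $N$ edges. The result $G'$, with the induced terminal set $T'$, is the instance of the reduction.

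The structural requirements are straightforward to check. Planarity is preserved because every operation is a local, embedding-respecting replacement. The maximum degree is $3$: each cycle vertex carries two cycle edges and one attached edge, each subdivision vertex has degree $2$, and all remaining vertices already had degree at most $3$. Finally, $2$-connectivity is preserved, because inflating the vertices of a $2$-connected graph into cycles and subdividing its edges again yields a $2$-connected graph: an inflated vertex stays internally connected through its cycle, and a subdivided edge is not a bridge precisely because the original edge was not one in the $2$-connected $G_1$. Hence $G'$ is a $2$-connected planar graph of maximum degree $3$.

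The crux, and the step I expect to be the main obstacle, is the exact cost accounting. The difficulty is that the internal cost of a cycle gadget is \emph{not} constant: joining the ports actually used by a tree inside $C_v$ requires between $0$ and $\deg(v)-1$ cycle edges depending on \emph{which} ports are used, so the construction cannot be a naive ``$+\text{const}$'' reduction. I would resolve this with the scaling factor $N$. After pruning non-terminal leaves, any Steiner tree of $G'$ uses each length-$N$ path entirely or not at all and never uses an added path; it therefore projects to a connected subgraph $S$ of $G$ that spans $T$, and its cost equals $N\cdot|E(S)| + h(S)$, where the overhead $h(S)$ is the number of cycle edges used and satisfies $0\le h(S)\le O:=\sum_{v:\deg(v)\ge 4}\deg(v)\le 2|E(G_1)|$. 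Conversely, every Steiner tree $S$ of $G$ lifts to a Steiner tree of $G'$ of exactly this cost.

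Setting $N:=O+1$ makes every overhead strictly smaller than one unit of $N$, so the optimum of $G'$ lies in the interval $[\,N k^\ast,\ N k^\ast+O\,]$, where $k^\ast$ is the optimum of $(G,T)$. It follows that $(G,T)$ has a Steiner tree with at most $k$ edges if and only if $(G',T')$ has one with at most $Nk+O$ edges: if $k^\ast\le k$ then the optimum of $G'$ is at most $N k^\ast+O\le Nk+O$, whereas if $k^\ast\ge k+1$ then it is at least $N(k+1)>Nk+O$. Since $N$ and $O$ are polynomially bounded, the entire construction is computable in polynomial time, which establishes the claimed hardness.
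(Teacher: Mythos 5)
Your reduction is correct, and its core is the same as the paper's: replace each vertex of degree at least four by a cycle whose attachment order follows the planar embedding, and stretch the original edges by a factor large enough (your $N=O+1$, the paper's $2m+1$) that the cycle-edge overhead (at most $O$, respectively $2m$) can never tip the decision threshold; your equivalence ``$k^\ast\le k$ if and only if $\mathrm{opt}(G')\le Nk+O$'' is exactly the paper's window $(2m+1)k < w < (2m+1)k+2m$. There are two differences in route. First, the paper passes through \textsc{Weighted Steiner Tree} (cycle edges of weight $1$, remaining edges of weight $2m+1$) and only afterwards removes the weights by subdividing; you build the unweighted instance directly, which is purely cosmetic. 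Second, and more substantively, the paper starts from the assumption that \textsc{Steiner Tree} is already NP-complete for \emph{biconnected} planar graphs, citing Garey and Johnson (whose result is stated for planar graphs), whereas you reduce from general planar instances and restore biconnectivity yourself via a planarity-preserving augmentation whose added edges are subdivided into paths too long to appear in any tree of cost at most $Nk+O$; this extra step makes your proof self-contained where the paper's is not, at the price of a slightly more delicate cost analysis. Two small points to tighten: the overhead depends on the Steiner tree in $G'$, not only on its projection $S$, so write the cost as $N|E(S)|+h$ with $0\le h\le O$ rather than $h(S)$; and the claim that added paths are never used should be invoked not just for optimal trees but for any pruned tree of cost at most $Nk+O$ (which holds, since $Nk+O<2N|E(G_1)|$), because that is what the backward direction of the equivalence actually requires.
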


\begin{theorem}
  {\sc Steiner Tree} is NP-complete for 3-connected cubic planar graphs.
\end{theorem}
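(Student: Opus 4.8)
The plan is to reduce from the NP-complete problem of Lemma~\ref{lem:2-connected-steiner-tree}, namely \textsc{Steiner Tree} on biconnected planar graphs of maximum degree~$3$. Since a biconnected graph has minimum degree at least~$2$, such an instance~$(G,T)$ has only vertices of degree~$2$ and~$3$; hence the sole obstructions to cubicity are the degree-$2$ vertices, and the sole obstructions to $3$-connectivity are the separation pairs of~$G$. I would construct in polynomial time a $3$-connected cubic planar graph~$G'$ with the same terminal set~$T$ together with a fixed, computable constant~$C$ such that the minimum Steiner tree sizes satisfy $\opt(G') = \opt(G) + C$. The reduction then maps $(G,T,k)$ to $(G',T,k+C)$, and membership in NP is immediate, since a Steiner tree is a polynomial-size certificate.

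First I would fix a planar embedding of~$G$ and augment it, face by face, inserting a terminal-free planar gadget~$B$ that accomplishes two tasks simultaneously: it raises every remaining degree-$2$ vertex to degree~$3$ by spending the missing incidence on an edge into~$B$, and it supplies the additional internally disjoint connections needed to destroy all separation pairs. To decide \emph{where} to place these connections I would mirror the combinatorial placement obtained in the proof of Lemma~\ref{lem:augmentation}, which uses the SPQR-tree of~$G$ to identify a family of face-incident edge pairs whose augmentation makes the graph $3$-connected; crucially, $3$-connectivity depends only on the \emph{existence} of an added connection between the two sides, not on its length, so each single-edge augmentation there may be replaced by a long gadget here. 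The gadget~$B$ must in addition be \emph{Steiner-neutral}: I would make every path through~$B$ joining two of its attachment vertices longer than $|E(G)|$, an upper bound on the size of any Steiner tree. Because~$G$ is connected and any two attachment vertices are already joined inside~$G$ by a cheaper path, no minimum Steiner tree of~$G'$ ever enters~$B$.

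Correctness then has two directions. Every Steiner tree of~$G$ yields one of~$G'$ larger by exactly the number of its edges that were subdivided during the construction, a fixed quantity; conversely, by neutrality a minimum Steiner tree of~$G'$ may be assumed to avoid all gadget edges, hence projects, after contracting subdivisions, to a Steiner tree of~$G$, which gives the matching inequality and fixes~$C$. The step I expect to be the main obstacle is the design and placement of~$B$: Steiner-neutrality forces~$B$ to be \emph{long}, but a long gadget attached to the host at only two vertices~$v',v''$ reintroduces precisely the defect we are removing, since $\{v',v''\}$ then becomes a new separation pair isolating the interior of~$B$. This tension between ``long enough never to be used'' and ``attached richly enough to create no new $2$-cut'' is exactly why the single-edge $H$-split augmentation of Lemma~\ref{lem:augmentation} cannot be invoked verbatim for \textsc{Steiner Tree}. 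I would resolve it by making~$B$ internally $3$-connected and attaching it at three vertices of a common face, subdividing a third boundary edge to create a missing degree-$2$ attachment point when necessary, so that no two vertex deletions can isolate its interior while all degrees stay bounded by~$3$ and neutrality is preserved.
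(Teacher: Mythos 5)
Your overall strategy matches the paper's at a high level: reduce from Lemma~\ref{lem:2-connected-steiner-tree}, keep the terminal set unchanged, and attach terminal-free gadgets that are too long to serve as shortcuts yet attached richly enough to create no new $2$-cuts. The tension you identify between Steiner-neutrality and connectivity is real, and your resolution (internally $3$-connected gadgets with at least three attachment vertices on a common face) is in the same spirit as the paper's ladder gadgets. However, your accounting of the optimum is flawed, and this breaks the reduction. You claim $\opt(G')=\opt(G)+C$ because a Steiner tree of $G$ lifts to one of $G'$ that is ``larger by exactly the number of its edges that were subdivided during the construction, a fixed quantity.'' That quantity is not fixed: it is the number of subdivided edges used by \emph{that particular tree}, which varies from tree to tree. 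In general one gets $\opt(G') = \min_S\bigl(|S| + \#\{\text{subdivided edges of $G$ in } S\}\bigr)$, the minimum taken over Steiner trees $S$ of $G$. For example, if $G$ is the $4$-cycle $a,b,c,d$ with $T=\{a,c\}$, then subdividing only $(a,b)$ gives $\opt(G')=\opt(G)$, while subdividing both $(a,b)$ and $(c,d)$ gives $\opt(G')=\opt(G)+1$; the correction term depends on the instance and on which edges optimal trees happen to use, so no computable constant $C$ exists short of solving the very problem you are reducing from. The paper avoids this by subdividing \emph{every} edge uniformly (eight subdivision vertices per edge), so the optimum scales multiplicatively ($k \mapsto 9k$), and by then performing all further attachments on gadget-internal (``red'') edges only, so that the lengths of the bold paths representing $G$ are never altered again. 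Non-uniform subdivision of host edges, which your construction requires for its attachment points, cannot yield an instance-independent additive correction.

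A second, smaller problem: you propose to place the connectivity gadgets by mirroring the edge pairs produced by Lemma~\ref{lem:augmentation}. That lemma assumes $\delta(G)\ge 3$, and its proof uses this hypothesis repeatedly (the choice of the root Q-node, the non-adjacency of Q-nodes in S-node skeletons). Your host graph has many degree-$2$ vertices---they are one of the two defects you are trying to repair---so the lemma's placement scheme is not available until after the degree problem has already been solved; as written, your construction is circular on this point. The paper sidesteps Lemma~\ref{lem:augmentation} entirely for this theorem: it replaces each maximal chain of degree-$2$ vertices by a cubic gadget whose bold path has exactly the chain's length, and then, purely face by face, joins consecutive chain gadgets by ladders, which is what certifies $3$-connectivity without touching any edge whose length matters.
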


\begin{proof}
  We reduce from {\sc Steiner Tree} in biconnected planar graphs of
  maximum degree~3.  Let $(G,T)$ be such an instance and fix an
  arbitrary planar embedding of $G$.  We call a \emph{chain} a maximal
  path whose internal vertices have degree~2.  After subdividing each
  edge with eight subdivision vertices, we can assume that each chain
  has length at least~$9$.  We now replace each chain of
  length~$\ell$, whose endpoints are degree-3 vertices $u$ and $v$, by
  a copy of the gadget from Fig.~\ref{fig:steiner-gadget}, where the
  size is chosen such we can identify the original chain with the bold
  path in the gadget. Denote by $G'$ the resulting graph together with
  its induced embedding from $G$.  For the terminal set we choose $T'
  = T$.  Observe that $G$ is a subgraph of $G'$ (bold paths in the
  gadgets).  Moreover, since the bold paths in the gadgets are
  shortest paths between the two endpoints, it follows that for any
  pair of vertices in $G$ there is a shortest path in $G'$ that uses
  only vertices in $G$.  Hence $(G,T)$ and $(G',T')$ are equivalent
  instances of {\sc Steiner Tree}.  Observe further that $G'$ is
  cubic.  It remains to make it 3-connected.

  \begin{figure}[tb]
    \centering
    \subfloat[]{\includegraphics[page=1]{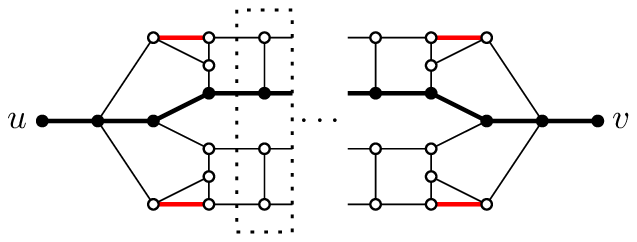}\label{fig:steiner-gadget}}\hfill
    \subfloat[]{\includegraphics[page=2]{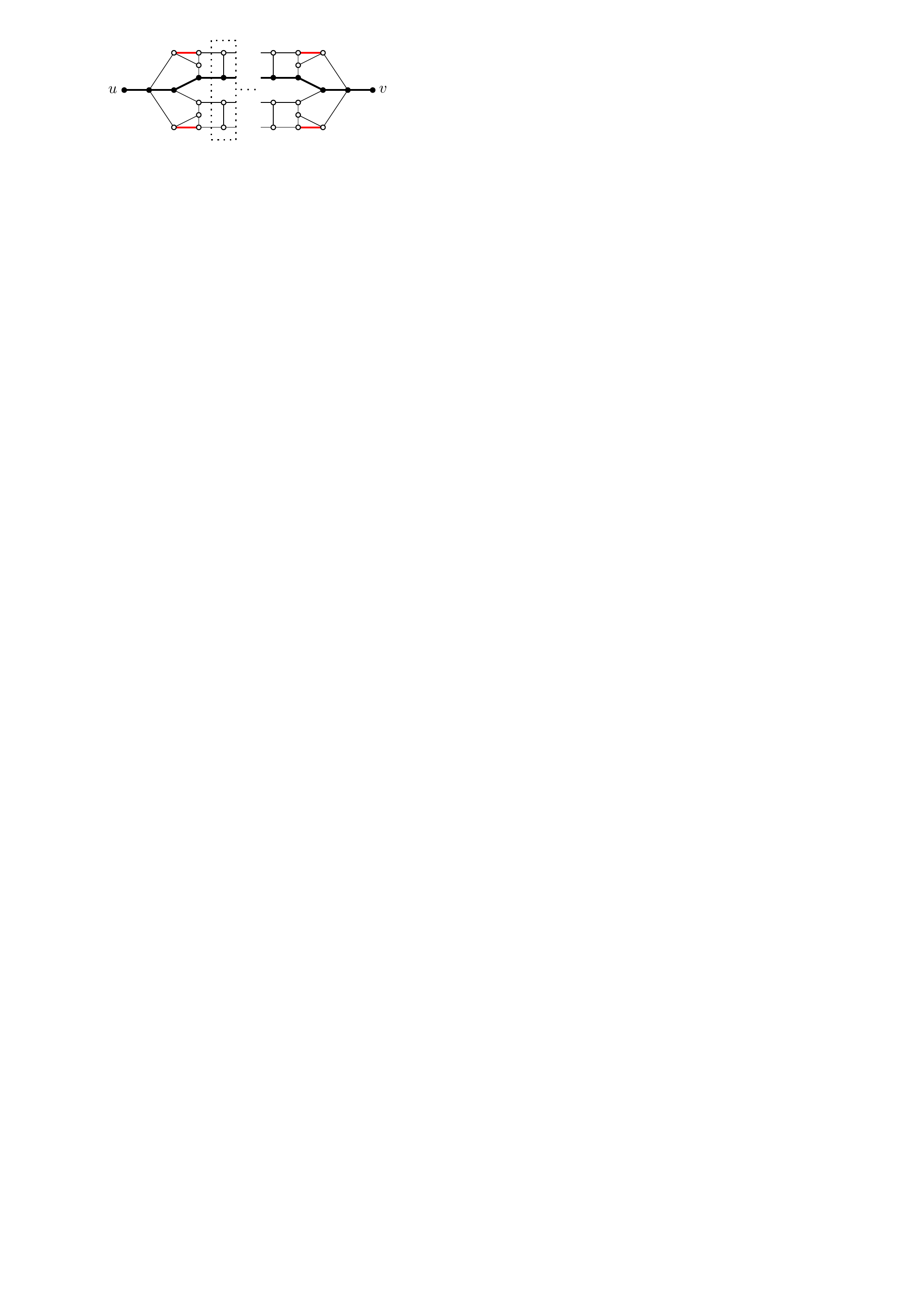}\label{fig:steiner-ladder}}
    \subfloat[]{\includegraphics[page=3]{fig/steiner}\label{fig:steiner-example}}
    \caption{Hardness of {\sc Steiner Tree} for 3-connected cubic graphs.}
    \label{fig:steiner}
  \end{figure}


  To complete the construction, for each face $f$ of $G$, we traverse
  its boundary and for any two consecutive chains with endpoints $u,v$
  and $v,w$ in such a face, we perform in $G'$ an augmentation
  operation on the two red edges $e,e'$ that are closest to $v$ and
  that are incident to the corresponding face $f'$ in $G'$ using a
  ladder gadget of sufficient length (Fig.~\ref{fig:steiner-ladder})
  so that the ladder gadget cannot be used as a shortcut.  Finally,
  observe that the augmentation operations ensure that the final graph
  is 3-connected and moreover, they do not change the lengths of the
  bold paths; see Fig.~\ref{fig:steiner-example} for an illustration
  of the whole reduction.  The resulting instance $(G',T')$ is hence
  still equivalent to the original instance $(G,T)$.  The reduction
  can clearly be carried out in polynomial time.
\end{proof}

\clearpage

\bibliographystyle{splncs03} \bibliography{augmentation}

\clearpage
\appendix

\section{Preliminaries and Definitions}\label{apx:pre}

\subsection{Drawings and Embeddings} 
A \emph{planar drawing} $\Gamma$ of a graph maps vertices to points in the plane and edges to internally disjoint curves. Drawing $\Gamma$ partitions the plane into topologically connected regions, called  {\em faces}.
The bounded faces are \emph{internal} and the unbounded face is the \emph{outer face}. 
A planar drawing determines a circular ordering of the edges incident to each vertex. Two planar drawings of a connected planar graph are \emph{equivalent} if they determine the same orderings and have the same outer face. A \emph{combinatorial embedding} is an equivalence class of planar drawings.

\subsection{Connectivity and SPQR-trees}\label{apx:SPQR}

A graph $G$ is \emph{connected} if there is a path between any two vertices.
A \emph{cutvertex} is a vertex whose removal disconnects the graph.
A \emph{separating pair} is a pair of vertices $\{u,v\}$ whose removal disconnects the graph.
A connected graph is \emph{$2$-connected} if it does not have a cutvertex and a $2$-connected graph is $3$-connected if it does not have a separating pair.
A $2$-connected plane graph $G$ is {\em internally $3$-connected} if $G$ can be extended to a $3$-connected planar graph by adding a vertex in the outer face and joining it to all the vertices incident to the outer face.

We consider $uv$-graphs with two special \emph{pole} vertices $u$ and $v$, which can be constructed in a fashion very similar to series-parallel graphs.  Namely, an edge $(u,v)$ is an $uv$-graph with poles $u$ and $v$.  Now let $G_i$ be an $uv$-graph with poles $u_i,v_i$ for $i=1,\dots,k$ and let
$H$ be a planar graph with two designated vertices $u$ and $v$ and $k+1$ edges $uv, e_1,\dots,e_k$.
We call $H$ the \emph{skeleton} of the composition and its edges are called \emph{virtual edges}; the edge $uv$ is the \emph{parent edge} and $u$ and $v$ are the poles of the skeleton $H$.
To compose the $G_i$ into an $uv$-graph with poles $u$ and $v$, we remove the edge $uv$ and replace each $e_i$ by $G_i$ for $i=1,\dots,k$ by removing $e_i$ and identifying the poles of $G_i$ with the endpoints of $e_i$.  In fact, we only allow three types of compositions: in a \emph{series composition} the
skeleton $H$ is a cycle of length~$k+1$, in a parallel composition $H$ consists of two vertices connected by $k+1$ parallel edge, and in a \emph{rigid composition} $H$ is 3-connected.

It is known that for every $2$-connected graph $G$ with an edge $uv$ the graph $G-st$ is an $uv$-graph with poles $u$ and $v$.  Much in the same way as series-parallel graphs, the $uv$-graph $G \setminus uv$ gives rise to a (de-)composition tree~$\mathcal T$ describing how it can be obtained from single edges.
The nodes of $\mathcal T$ corresponding to edges, series, parallel, and rigid compositions of the graph are \emph{Q-, S-, P-, and R-nodes}, respectively.  To obtain a composition tree for $G$, we add an additional root Q-node representing the edge $uv$.  To fully describe the composition, we
associate with each node~$\mu$ its skeleton denoted by $\skel(\mu)$.
For a node~$\mu$ of $\mathcal T$, the \emph{pertinent graph} $\pert(\mu)$ is the subgraph represented by the subtree with root~$\mu$.
Similarly, for a virtual edge $\eps$ of a skeleton~$\skel(\mu)$, the \emph{expansion graph} of $\eps$, denoted by $\expd(\eps)$ is the pertinent graph $\pert(\mu')$ of the neighbour $\mu'$ of $\mu$ corresponding to $\eps$ when considering $\mathcal T$ rooted at $\mu$.

The \emph{SPQR-tree} of $G$ with respect to the edge $uv$, originally introduced by Di Battista and Tamassia~\cite{dt-ogasp-90}, is the (unique) smallest decomposition tree~$\mathcal T$ for $G$.  Using a different edge $u'v'$ of $G$ and a composition of $G-u'v'$ corresponds to rerooting $\mathcal T$
at the node representing $u'v'$.  It thus makes sense to say that $\mathcal T$ is the SPQR-tree of $G$.  The SPQR-tree of $G$ has size linear in $G$ and can be computed in linear time~\cite{gm-lis-01}.  Planar embeddings of $G$ correspond bijectively to planar embeddings of all skeletons of
$\mathcal T$; the choices are the orderings of the parallel edges in P-nodes and the embeddings of the R-node skeletons, which are unique up to a flip.  When considering rooted SPQR-trees, we assume that the embedding of $G$ is such that the root edge is incident to the outer face, which is
equivalent to the parent edge being incident to the outer face in each skeleton.
We remark that in a planar embedding of $G$, the poles of any node $\mu$ of $\mathcal{T}$ are incident to the outer face of
$\pert(\mu)$. Hence, in the following we only consider embeddings of the pertinent graphs with their poles lying on the same face and refer to such embeddings as {\em regular}. 

Let $\mu$ be a node of $\mathcal T$, we denote the poles of $\mu$ by $u(\mu)$ and $v(\mu)$, respectively.
In the remainder of the paper, we will assume edge $(u(\mu),v(\mu))$ to be part of $\skel(\mu)$ and $\pert(\mu)$.
The outer face of a  (regular) embedding of $\pert(\mu)$ is the one obtained from such an embedding after removing the $(u(\mu),v(\mu))$ connecting its poles.
Also, the two paths incident to the outer face of $\pert(\mu)$ between $u(\mu)$ and $v(\mu)$ are called {\em boundary paths} of $\pert(\mu)$.

\subsection{Canonical Ordering}

Let $G = (V, E)$ be a $3$-connected plane graph with vertices $v_2$, $v_1$, and $v_n$ in this clockwise order along the outer face of $G$. Let $\pi = (P_1,\ldots,P_k)$ be an ordered partition of $V$ into paths, where $P_1=(v_1,v_2)$ and $P_k=(v_n)$. Define $G_i$ to be the subgraph of
$G$ induced by $P_1\cup \ldots \cup P_i$, and denote by $C_i$ the boundary of the outer face of $G_i$. We say that $\pi$ is a canonical ordering for $G$ if:
\begin{itemize}
\item each $C_i (i > 1)$ is a cycle containing $(v_1, v_2)$.
\item each $G_i$ is biconnected and internally $3$-connected, that is, removing two interior vertices of $G_i$ does not disconnect it; and
\item for each $i \in \{2,\ldots, k-1\}$, one of the two following conditions holds:
  \begin{itemize}
  \item (a) $P_i$ is a singleton, $\{z\}$, where $z$ belongs to $C_i$ and has at least one neighbor in $G \setminus G_i$.
  \item (b) $P_i$ is a chain, $\{z_1,\ldots,z_l\}$, where each $z_j$ has at least one neighbour in $G \setminus G_i$, and where $z_1$ and $z_l$ each have one neighbour on $C_{i-1}$, and these are the only two neighbors of $P_i$ in $G_{i-1}$.
  \end{itemize}
\end{itemize}

Observe that, by condition (b), if $P_i$ is a chain, then the two neighbours of $z_1$ and $z_l$ on $C_{i-1}$ are adjacent in $G_{i-1}$.

\section{Bounded-degree Augmentation} \label{apx:augmentation}

In this section we give an algorithm (Lemma~\ref{lem:augmentation}) to augment a $2$-connected planar graph $G$ with minimum degree $\delta(G)\geq 3$ and maximum degree $\Delta(G)$ to a $3$-connected planar graph $G'$ with $\delta(G')=\delta(G)$ and $\Delta(G')=\Delta(G)$ by applying
  $H$-splits to disjoint pairs of edges of $G$.

\begin{figure}[tb!]
  \centering \subfloat[]{ \includegraphics[page=1]{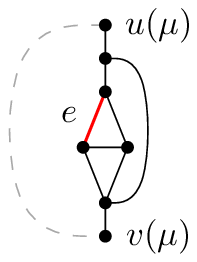} } \hfil \subfloat[]{ \includegraphics[page=2]{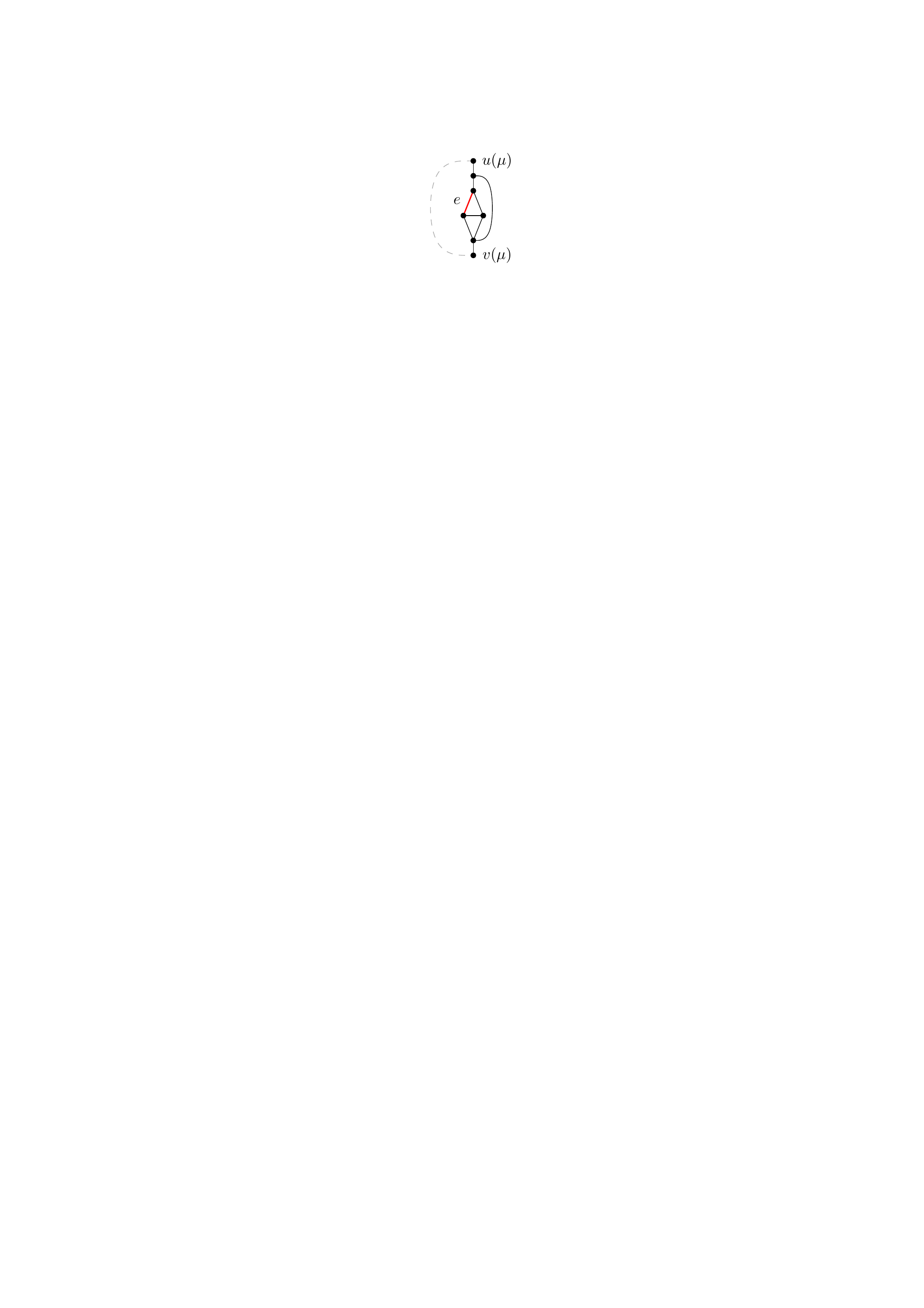} }
  \caption{(a) Pertinent graph of an $e$-externally $3$-connectible S-node $\mu$ and (b) the subdivision of a $3$-connected planar graph obtained by performing the $H$-split on $\langle e, (u(\mu),v(\mu))\rangle$ in graph $\pert(\mu)$.}
  \label{fig:ext-triconnectible}
\end{figure}

We start with some definitions.  Let $e$ be an edge of $\pert(\mu)$ incident the outer face of some regular embedding $\mathcal E_\mu$ of $\pert(\mu)$. Observe that, by the definition of outer face of a pertinent graph, it holds that $e \neq (u(\mu),v(\mu))$.  Then, $\mathcal E_{\mu}$ is {\em
  $e$-externally $3$-connectible} if either
\begin{inparaenum}
\item $\mu$ is a Q-node, that is, $\pert(\mu)=e$, or
\item the graph obtained from the $\pert(\mu)$ by performing an $H$-split on $\langle e, (u(\mu),v(\mu)))\rangle$ is a subdivision of a $3$-connected graph whose unique subdivision vertices are the poles of $\mu$; refer to Fig.~\ref{fig:ext-triconnectible}.
\end{inparaenum}
Also, we say that $\mu$ is {\em $e$-externally $3$-connectible} (or, simply, {\em externally $3$-connectible}) if $\pert(\mu)$ admits an $e$-externally $3$-connectible embedding, for some edge $e$ of $\pert(\mu)$.

\remove{ Let $\mu$ be an $e$-externally $3$-connectible node for some edge $e$ of $\pert(\mu)$.

  The following two structural properties hold when $\mu$ is a P- or an S-node; refer to Figs.~\ref{fig:ext-types-S} and~\ref{fig:ext-types-P}, respectively.

\begin{figure}[tb!]
  \centering \subfloat[]{ \includegraphics[height=0.25\textwidth,page=2]{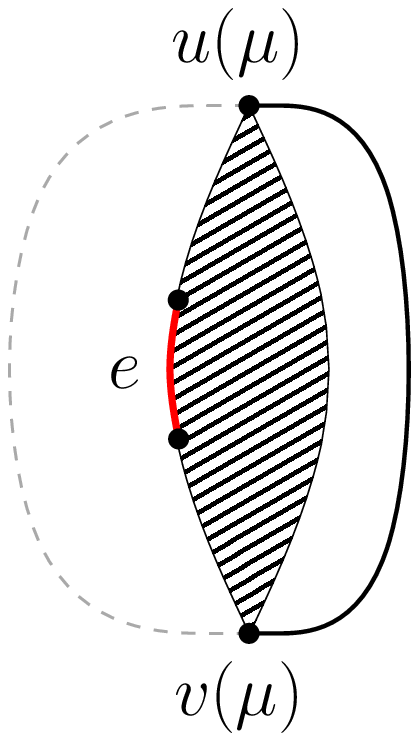}\label{fig:ext-types-S} } \hfil \subfloat[]{ \includegraphics[height=0.25\textwidth,page=1]{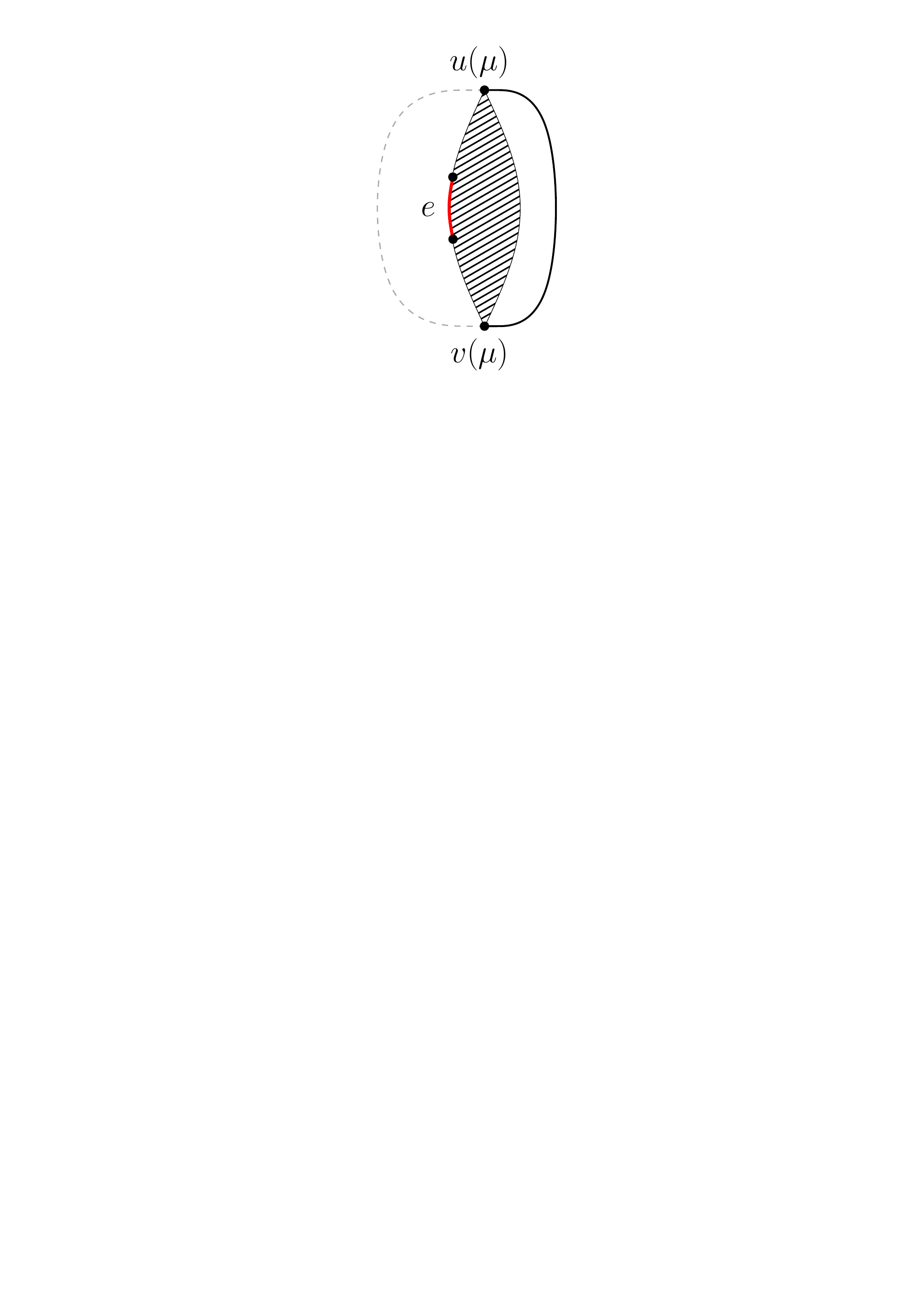}\label{fig:ext-types-P} }
  \caption{Illustration of the structure of the skeleton of an $e$-externally $3$-connectible node $\mu$ in the case in which $\mu$ is an (a) S-node or a (b) P-node.  Dotted virtual edges represent optional Q-nodes, while stripped thick edges represent the $e$-externally $3$-connectible children.}
  \label{fig:ext-types}
\end{figure}

\begin{property}\label{prop:S-nodes}
  If $\mu$ is an $e$-externally $3$-connectible S-node, then:
  \begin{itemize}
  \item $\mu$ has at most three children,
  \item there exists exactly a non-Q-node child of $\mu$ and this child is an $e$-externally $3$-connectible R- or P-node, and
  \item the virtual edges representing the (at most two) Q-node children of $\mu$ are incident to the poles of $\mu$ in $\skel(\mu)$.
  \end{itemize}
\end{property}

Property~\ref{prop:S-nodes} descends from the fact that since $\delta(G)\geq 3$ and since each internal node of $\cal T$ has at least two children, then at least one child $\nu$ of $\mu$ has to be a non-Q-node child and no two Q-node children can be adjacent in $\skel(\mu)$, as otherwise there would
exists a cut-vertex of $\skel(\mu)$ incident to two virtual edges representing adjacent Q-nodes, which is also a degree-$2$ vertex in $G$, contradicting the fact that $\delta(G)\geq 3$.  Furthermore, if there were two or more non-Q-node children, then performing a single $H$-split won't suffice to
make $\pert(\mu)$ $3$-connected.  Finally, if the unique non-Q-node child of $\mu$ were not $e$-externally $3$-connectible, neither will be $\mu$.

\begin{property}\label{prop:P-nodes}
  If $\mu$ is an $e$-externally $3$-connectible P-node, then:
  \begin{itemize}
  \item $\mu$ has exactly two children, one of which is a Q-node, and
  \item the unique non Q-node child of $\mu$ is an $e$-externally $3$-connectible R- or S-node.
  \end{itemize}
\end{property}

Property~\ref{prop:P-nodes} descends from the fact each internal node of $\cal T$ has at least two children and that, since $\mu$ is a P-node, at most one of its children is a Q-node. Hence, if there were two or more non-Q-node children of $\mu$, then performing a single $H$-split won't suffice to
make $\pert(\mu)$ $3$-connected. Again, if the unique non-Q-node child of $\mu$ were not $e$-externally $3$-connectible, neither will be $\mu$.  }

Let $\mu$ be a node of $\mathcal T$, let $e$ be an edge of $\pert(\mu)$, and let $\pert^*(\mu)$ be a graph obtained by applying $H$-splits on distinct pairs of edges in $\pert(\mu)$. We say that $e$ is a {\em free edge} if $e \in E(\pert(\mu)) \setminus E(\pert^*(\mu))$, that is, edge $e$ has not
been used in any $H$-split.

Consider a non-Q-node $\mu$. Let $\pert^*(\mu)$ be a graph obtained from $\pert(\mu)$ via a set of edge-disjoint $H$-splits, let $L_\mu=[e_1,e_2]$ and $R_\mu=[e_3]$ be two lists of free edges in $E(\pert^*(\mu))$, and let $\mathcal E^*_\mu$ be a regular embedding of $\pert^*(\mu)$. We say that the $4$-tuple $\langle \pert^*(\mu), \mathcal E^*_\mu, L_\mu, R_\mu \rangle $ is {\em extendible} if $L_\mu$, and $R_\mu$ are incident to different faces of $\mathcal E^*_\mu$ incident to the parent edge and $\mathcal E^*_\mu$ is $e$-externally $3$-connectible with $e \in L_\mu$.  
Observe that,
once $\pert^*(\mu)$, $L_\mu$, and $R_\mu$ have been fixed, there exists a unique (up to a flip) embedding $\mathcal E^*_\mu$ of $\pert^*(\mu)$ such that $\langle \pert^*(\mu), \mathcal E^*_\mu, L_\mu, R_\mu \rangle $ is extendible.  Hence, to easy the notation, in the following we will omit to
specify the embedding of $\pert^*(\mu)$. 
If $\mu$ is a Q-node representing edge $e=(u(\mu),v(\mu))$, then we also say that the triple $\langle e, L_\mu=[e], R_\mu=[e] \rangle $ is {\em extendible}, thus allowing $|L_\mu|=1$ and $L_\mu\cap R_\mu \neq \emptyset$ in this case.

For simplicity, we will also use the notation $\expd^*(e_\mu)$ to refer to $\pert^*(\mu)$ where $e_\mu$ is the virtual edge representing $\mu$ in the skeleton of its parent.

Let $\mu$ be an internal node in $\mathcal T$ with children $\mu_1,\ldots,\mu_k$ and let $f$ be a face of an embedding $\mathcal E_\mu$ of $\skel(\mu)$. Consider the clockwise sequence of virtual edges in $f$. This sequence induces a natural order of the lists $L_{\mu_i}$ of the children of $\mu$
whose corresponding virtual edges bound $f$ such that performing $H$-splits between consecutive pairs of free edges of such children does not violate planarity. Hence, in the following we will always assume lists $L_{\mu_i}$ to be ordered according to such a natural order. 

We root the SPQR-tree $\cal T$ of $G$ to an arbitrary Q-node $\rho$ whose unique child $\xi$ is an R-node. Observe that such a Q-node exists since $\delta(G)\geq 3$.  We process the nodes of $\mathcal T$ bottom-up and show how to compute for each node $\mu$ with children $\mu_1,\ldots,\mu_k$ an
extendible triple $\langle \pert^*(\mu), L_\mu, R_\mu \rangle $, starting from the extendible triples $\langle \pert^*(\mu_i), L_{\mu_i}, R_{\mu_i} \rangle $ of its children.
When we reach the root $\rho$, performing the $H$-split on $\langle (u(\rho),v(\rho)), e \in L_\xi \rangle$ in the graph $\pert^*(\xi)$, clearly yields a $3$-connected planar graph with the desired properties, as the triple $\langle \pert^*(\xi), L_\xi, R_\xi \rangle$ is extendible.

We show how to compute extendible triples for each non-root node~$\mu \in \cal T$.

Suppose $\mu$ is a Q-node representing edge $e=(u(\mu),v(\mu))$. In this case there is nothing to be done as the triple $\langle pert^*(\mu)=e,L_\mu=[e],R_\mu=[e]\rangle$ is extendible by definition. 

Suppose $\mu$ is an S-node with children $\mu_1,\ldots,\mu_k$.  Recall that since $\mu$ is an internal node of $\cal T$, it has at least two children; also, since $\delta(G) \geq 3$, no two Q-nodes are adjacent in $\skel(\mu)$. Hence, $\mu$ has at least a non-Q-node child $\nu$.
We distinguish two cases based on whether the number of non-Q-node children of $\mu$ is larger than one or not.

\begin{figure}[tb!]
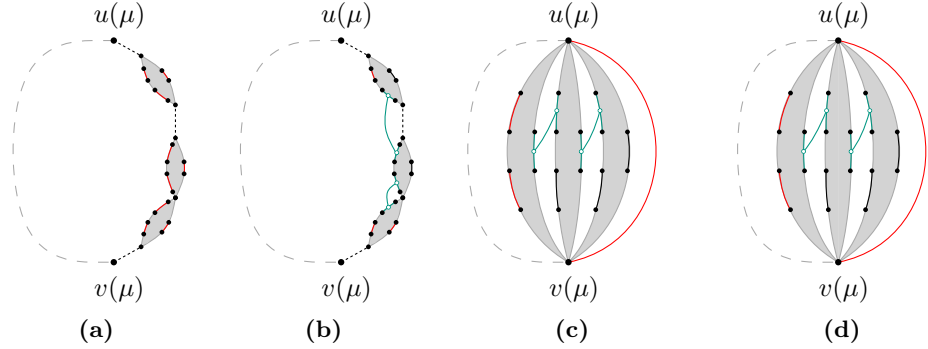

\centering
  \subfloat[]{\includegraphics[page=1]{fig/ext-S-node}\label{fig:ext-S-node-a}}
  \subfloat[]{\includegraphics[page=2]{fig/ext-S-node}\label{fig:ext-S-node-b}}
  \subfloat[]{\includegraphics[page=2]{fig/ext-P-node}\label{fig:ext-P-node-a}}
  \subfloat[]{\includegraphics[page=2]{fig/ext-P-node}\label{fig:ext-P-node-b}}
\caption{Illustration for the proof of Lemma~\ref{lem:augmentation} when $\mu$ is ((a),(b)) an S-node with three non-Q-node children $\nu_1$, $\nu_2$, and $\nu_3$, and ((c),(d)) a P-node with a Q-node child.
  ((a),(c)) Auxiliary graph $\overline{\pert(\mu)}$.
  ((b),(d)) Augmentation of $\overline{\pert(\mu)}$ to $\pert^*(\mu)$ via $H$-splits.}
\label{fig:ext-SP-node} 
\end{figure}

{\bf Case S1.} Assume $\nu$ is the unique non-Q-node child of $\mu$ and let $e_\nu$ be the corresponding virtual edge in $\skel(\mu)$.  We construct an extendible triple for $\mu$ as follows.  Let $\overline{\pert(\mu)}$ be the graph obtained from $\skel(\mu)$ by replacing $e_{\nu}$ in $\skel(\mu)$
with $\expd^*(e_{\nu})$. We set $\pert(\mu^*)=\overline{\pert(\mu)}$, $L_\mu = L_\nu$, and $R_\mu = R_\nu$.  The fact that the constructed triple is extendible is due to the fact that (i) $\langle \pert^*(\nu), L_\nu,R_\nu \rangle$ is extendible and that (ii) since $\nu$ is the unique non-Q-node
child of $\mu$ and since the remaining (at most two) Q-node children of $\mu$ are not adjacent in $\skel(\mu)$, the poles of the corresponding virtual edges cannot be part of a separation pair of $\pert^*(\mu)$.

{\bf Case S2.} Let $\nu_1,\ldots,\nu_s$ be the non-Q-node children of $\mu$ ordered as the corresponding virtual edges appear in $\skel(\mu)$ from $u(\mu)$ to $v(\mu)$.  We construct an extendible triple for $\mu$ as follows; refer to Fig.~\ref{fig:ext-SP-node}.
First, we construct an auxiliary graph $\overline{\pert(\mu)}$ starting from $\skel(\mu)$ by replacing each virtual edge $e_{\nu_i}$ in $\skel(\mu)$ corresponding to a non-Q-node child $\nu_i$ of $\mu$ with the expansion graph $\expd^*(e_{\nu_i})$ of $\nu_i$, for $i=1, \ldots, s$; see Fig.~\ref{fig:ext-S-node-a}.
Then, we obtain $\pert^*(\mu)$ from $\overline{\pert(\mu)}$ by performing an $H$-split on $\langle L_{\nu_i}[2], L_{\nu_{i+1}}[1]\rangle$, for $i=1,\ldots,s-1$; see Fig.~\ref{fig:ext-S-node-b}.
Finally, we set $L_\mu=[L_{\nu_1}[1],L_{\nu_s}[2]]$ and $R_\mu=R_{\nu_1}$.
We now show that the constructed triple is extendible.
Observe that each $H$-split can be seen as an operation that turns two non-Q-node children of $\mu$, together with the unique Q-node possibly separating them in $\skel(\mu)$, into a single externally $3$-connectible child of $\mu$.
Hence, at the end of the augmentation, there might exist at most two non-adjacent virtual edges representing Q-node children of $\mu$ left in $\skel(\mu)$, whose poles do not contribute to any separation pair of $\pert^*(\mu)$.

Suppose $\mu$ is an P-node. Recall that at most one child of $\mu$ can be a Q-node.
We construct an extendible triple for $\mu$ as follows; refer to Fig.~\ref{fig:ext-SP-node}.
First, we select an arbitrary embedding of $\skel(\mu)$ such that the unique child of $\mu$ that is a Q-node, if any, is incident to the outer face.
Let ${\nu_1},\ldots,{\nu_{s}}$ be the clockwise ordering of the non-Q-node children of $\mu$ around $u(\mu)$ determined by such an embedding, where ${\nu_1}$ is a non-Q-node child of $\mu$ incident to the outer face.
Second, we construct an auxiliary graph $\overline{\pert(\mu)}$ staring from $\skel(\mu)$ by replacing each virtual edge $e_{\nu_i}$ in $\skel(\mu)$ with $\expd^*(e_{\nu_i})$, for $i=1,\ldots,s$; see Fig.~\ref{fig:ext-P-node-a}.
Third, we obtain $\pert^*(\mu)$ from $\overline{\pert(\mu)}$  by performing an $H$-split on $\langle R_{\nu_i}[1], L_{\nu_{i+1}}[2] \rangle$, for $i=1,\ldots,s-1$; see Fig. ~\ref{fig:ext-P-node-b}.
Finally, we set $L_\mu=L_{\nu_1}$, and $R_\mu=R_{\nu_{s}}$, if there exists no Q-node child of $\mu$, or $R_\mu=[(u(\mu),v(\mu))]$, otherwise.

Suppose $\mu$ is an R-node.  Let $\mathcal{E}_\mu$ be the unique (up to a flip) regular embedding of $\skel(\mu)$ and let $\pi=(P_1=(u(\mu),v_2),P_2,\ldots,P_k=v(\mu))$ be a canonical ordering of $\skel(\mu)$ where $v_2$ is the neighbour of $u(\mu)$ different from $v(\mu)$ that is incident to the
outer face of $\mathcal{E}_\mu$.
Also, let $skel_i(\mu)$ be the embedded subgraph of $skel(\mu)$ induced by vertices $\bigcup^i_{j=1} P_j$ and $C_i$ be the cycle bounding the outer face of $skel_i(\mu)$. Further, let $\overline{\pert_i(\mu)}$ be the graph obtained from  $skel_i(\mu)$ as illustrated before.

We show how to augment $\overline{\pert_i(\mu)}$ to a new graph $\pert^*_i(\mu)$ via $H$-splits so that $\pert^*_i(\mu)$ is internally $3$-connected and the outer face of $\pert^*_i(\mu)$ contains at least $|C_i|-1$ free edges, each of which is separated by two vertices of $\skel_i(\mu)$.
Hence, $\pert^*_k(\mu)$ contains free edges on its outer face that can be used to instantiate $L_\mu$ and $R_\mu$.  Further, since $\skel_k(\mu)=\skel(\mu)$ is $3$-connected, so is $\pert^*_k(\mu)=\pert^*(\mu)$.
Hence, $\pert^*(\mu)$ is trivially $e$-externally $3$-connectible with respect to any edge $e \in L_\mu$.

The augmentation in done by induction on $i$.
%
We first assume that the child of $\mu$ corresponding to edge $\ell = (u(\mu),v_2)$ is a Q-node; we will show how to remove such an assumption at the end of the construction by performing a special~{$H$-split}. 

The base case is $i=1$. In this case, $skel_i(\mu)$ consists of the single virtual edge $\ell$. We obtain $\pert^*_i(\mu)$ by simply replacing edge $\ell$ with $\expd^*(\ell)$. Observe that, the outer face of $\pert^*_i(\mu)$ contains at least a free edge (in fact, exactly one since the
child of $\mu$ corresponding to $e$ is a Q-node). Also, $\pert^*_i(\mu)$ is internally $3$-connected, since $\expd^*(\ell)$  is externally $3$-connectible.

In the inductive step $1 < i \leq k$, assume we have already computed graph $\pert^*_{i-1}(\mu)$. By the inductive hypothesis, $\pert^*_{i-1}(\mu)$ is internally $3$-connected and the outer face of $\pert_{i-1}^*(\mu)$ contains at least $|C_{i-1}|-1$ free edges, each of which is separated by two vertices of $\skel_{i-1}(\mu)$.
Let $P_i=(l_1,\ldots,l_c)$ be the $i$-th element in $\pi$. Suppose $|P_i| > 1$, that is, $P_i$ is not a singleton vertex; the case $|P_i| =1$ being simpler.  Observe that, since $\pi$ is a canonical ordering, there exists a virtual edge $(l_1,l_k)$ that is incident to the outer face of
$\skel_{i-1}(\mu)$.
Let $e$ be the free edge along the outer face of $\pert^*_{i-1}(\mu)$ between $l_1$ and $l_k$ and belonging to the expansion graph of virtual edge $(l_1,l_k)$.  We construct $\pert_{i}^*(\mu)$ as follows.
First, initialise $\pert_{i}^*(\mu)$ to $\pert_{i-1}^*(\mu)$.
Let $P^*_i=(x=l_0,l_1,\ldots,l_c,l_{c+1}=y)$ be the list of vertices obtained by prepending and appending to $P_i$ vertices $x$ and $y$, respectively, where $x$ and $y$ are the only two neighbours of $l_1$ and $l_c$ in $\pert_{i}^*(\mu)$, respectively.  Add to $\pert_{i}^*(\mu)$ vertices
$l_1,\ldots,l_c$ in the outer face of $\pert_{i}^*(\mu)$ and edges $(l_j,l_{j+1})$, with $0 \leq j < c$.
Denote by $\nu_j$ the child of $\mu$ corresponding to the virtual edge $e_j$, for $j=1,\dot,c$, and
let $H_i$ be the resulting graph; see Fig.~\ref{fig:ext-R-node-a}.
Then, replace each edge $e_j=(l_j,l_{j+1})$ with $\expd^*(e_j)$, for $j=0,\ldots, c$ (except, when $i=k$, for the edge
representing the parent of $\mu$).
Let $\overline{H_i}$ be the resulting graph; see Fig.~\ref{fig:ext-R-node-b}.
Finally, we obtain $\pert^*_i(\mu)$ from $\overline{H_i}$ by performing an $H$-split (i) on 
$\langle e, L_{\nu_1}[1],\rangle$ and (ii) on $\langle L_{\nu_j}[2], L_{\nu_{j+1}}[1]\rangle$, for $j=1,\ldots,c-1$; see Fig.~\ref{fig:ext-R-node-c}.
Graph $\pert_{i}^*(\mu)$ is internally $3$-connected, since $\skel_i(\mu)$ is internally $3$-connected and since once an
$H$-split is performed the poles of the virtual edge $e_{\nu_j}$ of $\skel_i(\mu)$ whose expansion graph $\expd^*(e_{\nu_j})$ interested by the $H$-split do not belong to a separation pair in $\pert^*_i(\mu)$.
Further, there exists a free edge on the outer face of $\pert_{i}^*(\mu)$ between each pair of consecutive vertices in $(l_0,l_1,\ldots,l_c,l_{c+1})$, namely, for each virtual edge
$e_j=(l_j,l_{j+1})$ with $0 \leq j \leq c$, edge $R(\nu_j)$ is incident to the outer face of $\pert_{i}^*(\mu)$.

\begin{figure}[tb!]
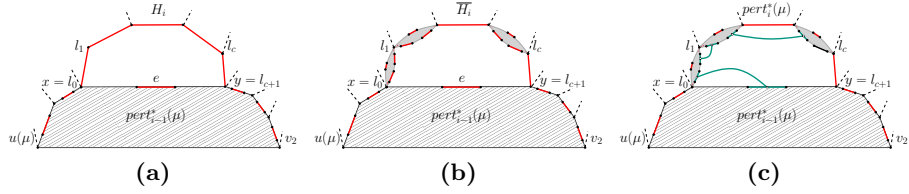

  \centering
  \subfloat[]{ \includegraphics[height=0.16\textwidth,page=1]{fig/ext-R-node}\label{fig:ext-R-node-a} }
  \hfil
  \subfloat[]{ \includegraphics[height=0.16\textwidth,page=2]{fig/ext-R-node}\label{fig:ext-R-node-b} }
  \hfil
  \subfloat[]{ \includegraphics[height=0.16\textwidth,page=3]{fig/ext-R-node}\label{fig:ext-R-node-c} }
  \caption{Illustration for the proof of Lemma~\ref{lem:augmentation} when $\mu$ is an R-node.
    Graphs~(a)~$H_i$ and~(b)~$\overline{H_i}$.
    (c) Augmentation of $\overline{H_i}$ to $\pert_i^*(\mu)$ via $H$-splits.}
  \label{fig:ext-R-node}
\end{figure}

To complete the proof, we only need to show that the child of $\mu$ corresponding to the virtual edge $\ell=(u(\mu),v_2)$ needs not to be a Q-node.
Observe that, in the inductive construction we didn't make use of the free edge corresponding to edge $\ell$ in any $H$-split.
Let $\nu_\ell$ be the child of $\mu$ corresponding to $\ell$ and let $\tau$ be the child of $\mu$ whose corresponding virtual edge is incident to $\ell$ and to the outer face of $\skel(\mu)$.  Once $\pert_{k}^*(\mu)$ has been constructed, we replace $\ell$ with $\expd^*(\ell)$ and perform an $H$-split on $\langle L_{\nu_\ell}[2], L_{\tau}[1] \rangle$.

Altogether we have proved the following main result.

\rephrase{Lemma}{\ref{lem:augmentation}}{
  Let~$G=(V,E)$ be a $2$-connected planar graph with minimum degree $\delta(G)\geq 3$ and maximum degree $\Delta(G)$.  There exist disjoint pairs $(e'_1,e''_1)$, $\ldots$, $(e'_k,e''_k)$ of edges in $E$ such that performing the $H$-splits
  $\langle e'_1,e''_1\rangle$, $\ldots$, $\langle e'_k,e''_k\rangle$ yields a $3$-connected planar graph $G'$ with $\delta(G')=\delta(G)$ and $\Delta(G')=\Delta(G)$.}


\section{Omitted Proofs from Section~\ref{sec:hardness-augmentation}}

\rephrase{Theorem}{\ref{thm:mis-cubic}}{
  \mis is NP-complete for $3$-connected cubic planar graphs.}

\begin{proof}
  Let~$G$ be a $2$-connected cubic planar graph. By Lemma~\ref{lem:augmentation}, a $3$-connected cubic planar graph $G'$ can be obtained from $G$ by applying $H$-splits on $k \in O(n)$ distinct pairs of edges in $E(G)$.
  
  Let $\langle (a_1,b_1),\ldots,(a_k,b_k)\rangle$ be any ordering of the set of edge pairs of $E(G)$ determined by the algorithm described in the proof of Lemma~\ref{lem:augmentation}.
  We augment graph $G$ to an auxiliary graph $G^*$ as follows. For $i=1,\ldots,k$, we apply the construction illustrated in Fig.~\ref{fig:mis-cubic-gadget} to the edge pair $(a_i,b_i)$
  It is easy to see that $G^*$ is planar, cubic, and $3$-connected.  Furthermore, graph $G^*$ can be inductively defined as follows. Let $G_0=G$ and $G_{i}$ be the graph
  obtained by applying the construction illustrated in Fig.~\ref{fig:mis-cubic-gadget} to the edge pair $(a_i,b_i)$ of graph $G_{i-1}$. Then, we have that $G^*=G_{k}$.
  
  By Lemma~\ref{lem:independent-increase}, it holds that $\alpha(G_i) = \alpha(G_{i-1}) + 5$, with $i=1,\ldots,k$.  It follows that~$G$ admits an independent set of size~$w$ if and only if~$G^*=G_{k}$ admits an independent set of size~$w + 5k$. Also, $|V(G^*)|=|V(G)|+12 k$. Since $k \in O(n)$ and
  since each graph $G_i$ can be obtained by $G_{i-1}$ in constant time, this is a polynomial-time reduction from \mis in $2$-connected cubic planar graphs to \mis in $3$-connected cubic planar graphs.  The fact that the former is NP-complete~\cite{mohar-fcgpag-01} implies the claim.
\end{proof}

\rephrase{Lemma}{\ref{lem:coloring-strenghten}}{
  The {\sc $3$-Coloring} problem is NP-complete for $2$-connected planar graphs with minimum degree $4$ and maximum degree $7$.
}

\begin{proof}
We show a reduction from the  {\sc $3$-Coloring} problem for $4$-regular planar graphs~\cite{DAILEY1980289}.

    \begin{figure}[tb!]
    \centering
    \subfloat[]{
      \includegraphics[page=2]{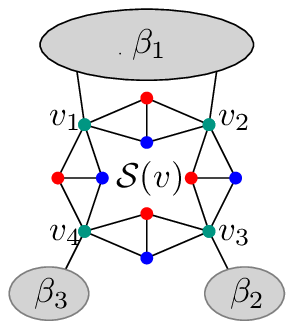}
      \label{fig:cutvertex-coloring-a}
    }\hfil
    \subfloat[]{
      \includegraphics[page=1]{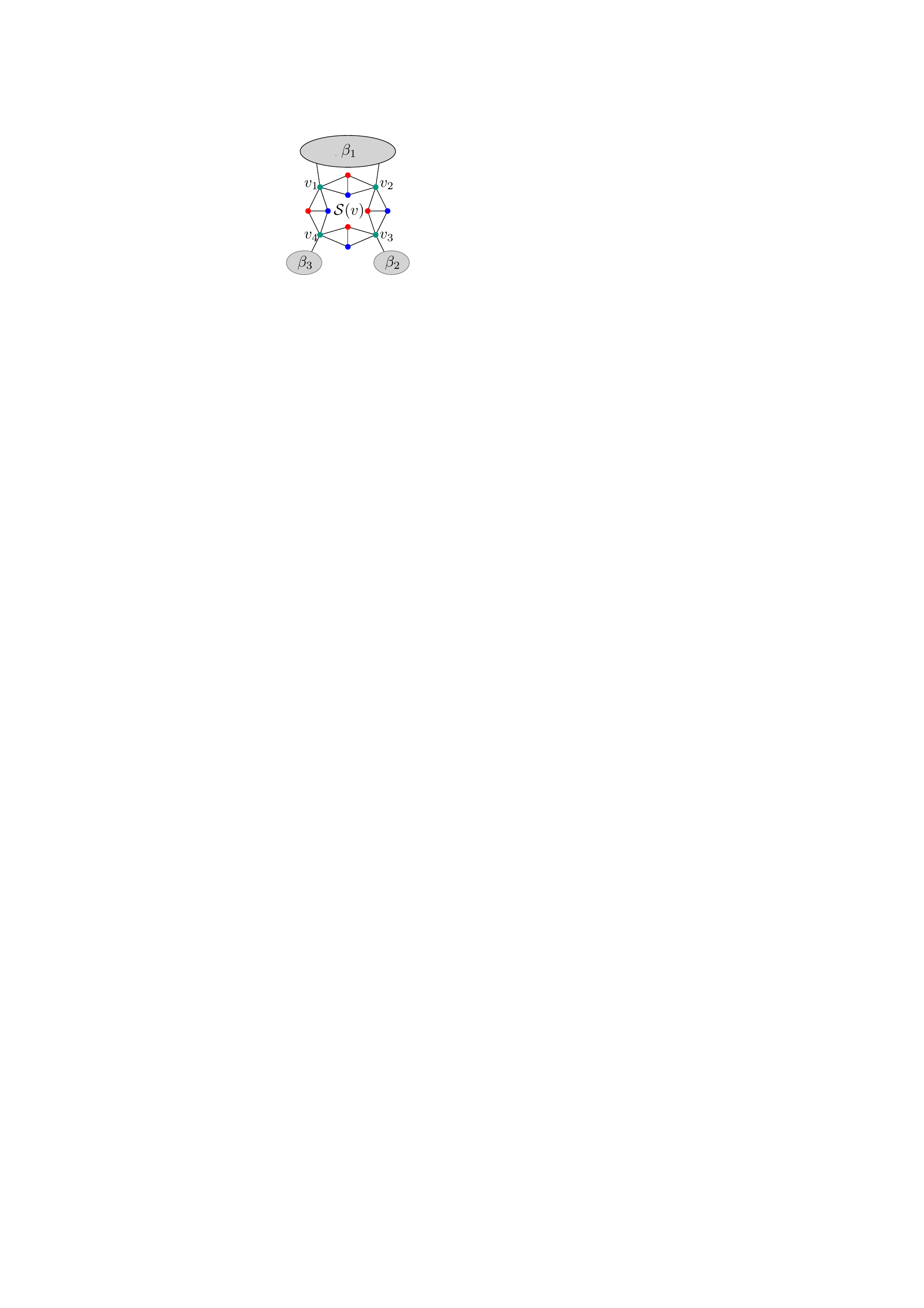}
      \label{fig:cutvertex-coloring-b}
    }
    \caption{
      (a) A cut vertex $v$ incident to three blocks $\beta_1$, $\beta_2$, and $\beta_3$.
      (b) Gadget $\mathcal S(v)$ that replaces $v$ with new cut vertices each incident to exactly two blocks.
    }\label{fig:cutvertex-coloring}
  \end{figure}
  
  Let $G$ be a $4$-regular plane graph $G$.
  First, we replace each cut vertex $v$ incident to more than two blocks with the gadget $\mathcal S(v)$ illustrated in Fig.~\ref{fig:cutvertex-coloring-b} to obtain a new plane graph $G'$. Observe that, $G'$ is $3$-colorable if and only if $G$ is and that each cut vertex
  of $G'$ is now incident to exactly two blocks. Further,  it is easy to verify that $\Delta(G') \leq 5$.

  Second, let $(v,x)$ and $(v,y)$ be two edges of $G'$ appearing consecutively around a cut vertex $v$ and each belonging to a different block. We augment $G'$ to a minimum degree $4$ and maximum degree $6$ planar graph $G''$ such that $G'$ is $3$-colorable if and only if $G''$ is and
  $\mathcal C(G'')=\mathcal C(G')-1$ by replacing the pair $\langle (v,x), (v,y) \rangle$ with $A_1(\langle (v,x), (v,y) \rangle)$ (see Fig.~\ref{fig:A1}).
  Repeating such an augmentation for each cut vertex, eventually yields a $2$-connected planar graph $G^*$ that is $3$-colorable if and only if $G$ is.
  Further, since each augmentation increases the degree of the end points of the selected edge pair by at most $1$ and since each edge might be involved in at most two augmentations (if both its end points are
  cut vertices), it follows that $\Delta(G^*) \leq \Delta(G') + 2 = 7$.
  \end{proof}

  \section{Omitted Proofs from Section~\ref{sec:hardness-other}}

\rephrase{Lemma}{\ref{lem:mis-triangulation-gadget}}{
  Let~$G$ be a $2$-connected plane graph and $f_{\geq 4}(G)$ be the
  number of faces of $G$ whose size is larger than $3$. There exists a
  $2$-connected plane graph $G'$ such that (i)
  $\alpha(G')=\alpha(G)+1$ and (ii)
  $f_{\geq 4}(G')= f_{\geq 4}(G) -1$.
}

\begin{proof}
  Let $f$ be any non-triangular face of $G$ and let $\ell(f)$ be the
  length of $f$. Also, for any $k\geq 4$, let $\Phi_{k}$ be the graph
  constructed as follows. First, intialize $\Phi_{k}$ to the union of
  a cycle $C_k=c_1,c_2,\ldots,c_{k}$ of lenght $k$ and of the complete
  graph on the four vertices $\{a,b,c,d\}$. Then, add to $\Phi_{k}$
  edges $(d,c_i)$, for $i=1,2,\ldots,k$, and edges
  $(c_1,d),(c_1,b),(c_1,c)$ and $(c_2,c)$. Observe that, $\Phi_{k}$ is
  internally triangulated and hence $2$-connected.  Refer to
  Fig.~\ref{fig:mis-triangulation-inner} for an illustration of the
  gadget $\Phi_{6}$.  Graph $G'$ can be obtained from $G$ by
  identifying $f$ with the outer face of gadget $\Phi_{\ell(f)}$.
  Clearly, graph $G'$ is $2$-connected as $G$ and $\Phi_{\ell(f)}$
  are.  Also, since face $f$ has been replaced by an internally
  triangulated graph, it is~$f_{\ge 4}(G') = f_{\ge 4}(G) - 1$.
  
  We now prove that $\alpha(G')=\alpha(G)+1$.  First, observe that an
  independent set~$I'$ in~$G'$ can contain at most one vertex
  in~$\{a,b,c,d\}$, since these vertices induce a~$K_4$.  Hence, $I'
  \cap V(G)$ is an independent set of size~$\alpha(G') - 1$, showing
  that $\alpha(G') \le \alpha(G) + 1$.  Conversely, if~$I$ is an
  independent set in~$G$, then~$I' \cup \{a\}$ is an independent set
  in~$G'$, showing that~$\alpha(G') \ge \alpha(G) + 1$.
\end{proof}

\rephrase{Theorem}{\ref{thm:mis}}{
  {\em \mis} is NP-complete for planar triangulations.
}

\begin{proof}
  Let~$G$ be a $2$-connected plane graph.  Observe that applying the
  reduction from Lemma~\ref{lem:mis-triangulation-gadget} to a face
  of~$G$ whose size is larger than $3$ yields a $2$-connected plane
  graph~$G'$ where $\alpha(G') = \alpha(G)+1$ and the number of
  non-triangular faces of $G'$ is one less the number of
  non-triangular faces of $G$.  Iterating this reduction eventually
  leads to a planar triangulation~$G^*$
  with~$\alpha(G^*) = \alpha(G) + f_{\ge 4}(G)$, where~$f_{\ge 4}(G)$
  denotes the number of faces of size at least four in~$G$.
  Hence,~$G$ admits an independent set of size~$k$ if and only
  if~$G^*$ admits an independent set of size~$k+f_{\ge 4}(G)$.
	
  It follows that the described procedure is a polynomial-time
  reduction from \mis in $2$-connected planar graphs to \mis in planar
  triangulations.  The fact that the former is
  NP-complete~\cite{mohar-fcgpag-01} implies the claim.
\end{proof}

\rephrase{Lemma}{\ref{lem:2-connected-steiner-tree}}{
  {\sc Steiner Tree} is NP-complete for biconnected planar
  graphs of maximum degree~3.
}

\begin{proof}
  We start with an instance $(G,T)$ of (unweighted) {\sc Steiner
    Tree}, where $G$ is a biconnected planar graph with $n$ vertices
  and $m$ edges, and $T \subseteq V(G)$ is a set of terminals.  This
  problem is known to be NP-complete~\cite{gj-rstpnpc-77}.  We first
  construct a new instance $(G',T')$ of {\sc Weighted Steiner Tree}
  where $G'$ is biconnected and has maximum degree-3 by replacing each
  vertex $v$ of degree more than~3 by a cycle $C_v$ of the same length
  so that $G'$ is planar.  The edges of the cycles have weight~1, and
  we give the remaining edges a weight of $2m+1$.  For each $v \in
  V(G)$, if $v$ has degree at most~3, it is $v \in T'$, otherwise, we
  choose an arbitrary vertex $u \in C_T$ and put it in $T'$.

  A Steiner tree with $k$ edges in $G$ can be augmented to a Steiner
  tree in $G'$ of weight $w$ with $(2m+1)k < w < (2m+1)k+2m$ by adding
  for cycle $C_T$ all except one of the edges.  Conversely, a Steiner
  tree in $G'$ of weight $w$ yields a Steiner tree in $G$ with
  $\lfloor w/(2m+1) \rfloor$ edges.  The reduction can be performed in
  polynomial time

  Finally, to get rid of the weights, we subdivide each edge of weight
  $w$ by $w-1$ subdivision vertices.  This shows that also the
  unweighted version is hard for these graphs.  Observe that this is a
  polynomial-time reduction since the weights above are polynomially
  bounded in the input size.
\end{proof}

\end{document}